\crefname{property}{Property}{Properties}
\let\oldendproof\endproof
\renewcommand\endproof{~\hfill$\qed$\oldendproof}
\newcommand{\UER}{UER-RF\xspace}
\newcommand{\UERII}{UER-USF\xspace}
\newcommand{\UERlong}{Unit Edge-length Rectilinear drawings with Rectangular Faces\xspace}
\newcommand{\UERIIlong}{Unit Edge-length Rectilinear drawings with Unit Square Faces\xspace}
\newcommand{\leftA}{\textsf{Left}\xspace}
\newcommand{\upA}{\textsf{Up}\xspace}
\newcommand{\nullA}{\textsf{null}\xspace}
\title{Unit Edge-Length Rectilinear Drawings with Crossings and Rectangular Faces\thanks{This research was supported, in part,  by MUR of Italy (PRIN Project no.~2022ME9Z78~-- NextGRAAL and PRIN Project no.~2022TS4Y3N~-- EXPAND). The sixth author was supported by Ce.Di.Pa. - PNC Programma unitario di interventi per le aree  del terremoto del 2009-2016 - Linea di intervento 1 sub-misura B4 - "Centri di ricerca per l'innovazione" CUP J37G22000140001.}} 
\author{Patrizio Angelini\inst{1}
\and
Carla Binucci\inst{2}
\and
Giuseppe Di Battista\inst{3}
\and
Emilio Di Giacomo\inst{2}
\and
Walter Didimo\inst{2}
\and
Fabrizio Grosso\inst{3}
\and
Giacomo Ortali\inst{2}
\and
Ioannis G. Tollis\inst{4}
}
\institute{John Cabot University \email{pangelini@johncabot.edu} \and University of Perugia 
\email{\textit{firstname.lastname}@unipg.it}
\and  University of Roma Tre 
\email{\textit{firstname.lastname}@uniroma3.it}
\and University of Crete \email{tollis@csd.uoc.gr}}
\begin{document}

\maketitle



\begin{abstract}
Unit edge-length drawings, rectilinear drawings (where each edge is either a horizontal or a vertical segment), and rectangular face drawings are among the most studied subjects in Graph Drawing. 
However, most of the literature on these topics refers to planar graphs and planar drawings.
In this paper we study drawings with all the above nice properties but that can have edge crossings; we call them \UERlong (\UER drawings). We consider crossings as dummy vertices and apply the unit edge-length convention to the edge segments connecting any two (real or dummy) vertices. 
%
Note that \UER drawings are grid drawings (vertices are placed at distinct integer coordinates), which is another classical requirement of graph visualizations.
%
%
We present several efficient and easily implementable algorithms for recognizing graphs that admit \UER drawings and for constructing such drawings if they exist. We consider restrictions on the degree of the vertices or on the size of the faces. For each type of restriction, we consider both the general unconstrained setting and a setting in which either the external boundary of the drawing is fixed or the rotation system of the graph is fixed as part of the input.    
%
\end{abstract}

\section{Introduction}\label{se:intro}

Planar graph drawings where all edges have unit length and are represented as either horizontal or vertical segments (rectilinear drawings), and where all faces are convex (i.e., rectangles), adhere to several of the most celebrated Graph Drawing aesthetics. Further,~such drawings can be translated in the plane so that their vertices have integer coordinates. Motivated by these considerations, a recent paper investigated these types of planar layouts~\cite{Alegria22}.

We extend the above drawing convention to non-planar drawings, considering each crossing as a ``dummy'' vertex and applying the unit edge-length constraint to the edge segments connecting any two vertices, either real or dummy. In this setting, we study \emph{\UERlong} (\emph{\UER drawings}), i.e., rectilinear drawings in which all the edges have unit length and all the faces (including the external one) are rectangles. Examples of \UER drawings are presented in \Cref{fig:crd-examples}. Note that, in addition to guaranteeing edges without bends and rectangular faces, \UER drawings (when they exist) are typically more compact than orthogonal grid drawings computed with classical algorithms, which aim to minimize crossings and bends at the same time; for a visual comparison, see for example \Cref{fig:gdt}.  

\begin{figure}[tb]
    \centering
    \begin{subfigure}{0.25\textwidth}
            \includegraphics[width=\textwidth,page=3]{./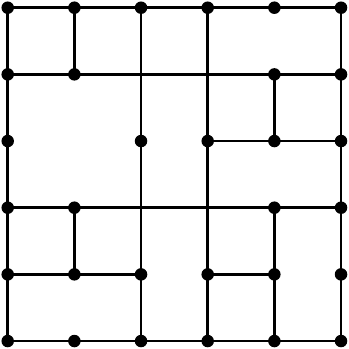}
            \subcaption{$5 \times 5$}
            \label{fig:crd-examples-a}
    \end{subfigure}
    \hfil
    \begin{subfigure}{0.25\textwidth}
            \includegraphics[width=\textwidth,page=2]{./figures/crd-examples.pdf}
            \subcaption{$5 \times 5$}
            \label{fig:crd-examples-b}
        \end{subfigure}
        \hfil
        \begin{subfigure}{0.25\textwidth}
            \includegraphics[width=\textwidth,page=1]{./figures/crd-examples.pdf}
            \subcaption{$5 \times 5$}
            \label{fig:crd-examples-c}
        \end{subfigure}
    \caption{\UER drawings: with Unit Square Faces (a), with no internal degree-3 vertex (b), and in the general setting (c). All drawings occupy a $5 \times 5$ grid area.}
    \label{fig:crd-examples}
\end{figure}

\begin{figure}[h]
    \centering
    \begin{subfigure}{0.32\textwidth}
            \includegraphics[width=\textwidth,page=3]{./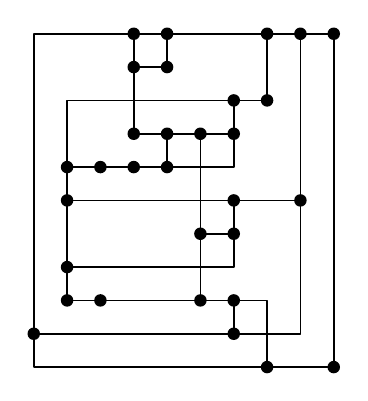}
            \subcaption{$11 \times 10$}
            \label{fig:gdt-a}
    \end{subfigure}
    \hfil
    \begin{subfigure}{0.32\textwidth}
            \includegraphics[width=\textwidth,page=2]{./figures/gdt.pdf}
            \subcaption{$11 \times 11$}
            \label{fig:gdt-b}
        \end{subfigure}
        \hfil
        \begin{subfigure}{0.32\textwidth}
            \includegraphics[width=\textwidth,page=1]{./figures/gdt.pdf}
            \subcaption{$10 \times 11$}
            \label{fig:gdt-c}
        \end{subfigure}
    \caption{The same graphs depicted in \Cref{fig:crd-examples} drawn with the topology-shape-metrics algorithm implemented in the GDToolkit library \cite{DBLP:reference/crc/BattistaD13}. Below each drawing we report the width and the height in terms of grid units. These dimensions are at least twice those of the drawings in \Cref{fig:crd-examples}; the three drawings have been scaled-down to fit in the same row.}
    \label{fig:gdt}
\end{figure}

The problem we study is related to several challenging problems.
Recognizing graphs that admit planar straight-line drawings with all edges of the same length is NP-hard~\cite{cdr-pegse-07,DBLP:journals/dam/EadesW90} and, even stronger, $\exists\mathbb{R}$-complete~\cite{DBLP:conf/compgeom/AbelDDELS16,Schaefer2013}.
Testing whether a graph is rectilinear planar is NP-hard~\cite{DBLP:journals/siamcomp/GargT01}, as well as, recognizing graphs that admit unit edge-length drawings with integer coordinates is NP-complete even for trees~\cite{DBLP:journals/ipl/BhattC87,g-ulebt-89,gsz-grcpc-21}.
On the other hand, recognizing planar graphs that admit rectilinear drawings with rectangular faces (with unconstrained edge lengths) can be done in polynomial time if the planar embedding is part of the input~\cite{DBLP:journals/jal/RahmanNG04,DBLP:journals/siamcomp/Tamassia87}; see also~\cite{f-rsrpg-13,DBLP:books/ws/NishizekiR04,nr-rda-13}.
Additionally, observe that \UER drawings are RAC drawings (for surveys see \cite{DBLP:conf/esa/AngeliniBK00U23,DBLP:books/sp/20/Didimo20}) and that non-planar drawings are intensively studied in the beyond planarity area (for surveys see \cite{DBLP:journals/csur/DidimoLM19,DBLP:books/sp/20/Hong20}).

Alegr{\'{\i}}a et al.~\cite{Alegria22} show that recognizing planar graphs admitting planar \UER drawings is feasible in polynomial time. We aim to extend this result to non-planar graphs. We study the recognition and the construction of \UER drawings under various conditions: when the rotation system of the vertices is part of the input, when it is not, and when the external face is part of the input or not.
In our problem, an important role is played by the assignment of angles around the vertices. Requiring rectangular faces enforces $180^\circ$ angles at degree-2 vertices, and the difficulty emerges in presence of degree-3 vertices. In view of this, we present incremental restricted scenarios in which we are able to solve the problem efficiently, and then give a fixed-parameter-tractable (FPT) algorithm for the general case, parameterized by the number of degree-3 vertices.
Additionally, we consider \UER drawings in which all faces, except the external one, are unit-area squares. We refer to these drawings as \emph{\UERIIlong} (\emph{\UERII drawings}); they have the nice feature of fully exploiting the available rows and columns; see \Cref{fig:crd-examples-a} for an example.
Let $G$ be an $n$-vertex graph; we present the following~results:

\medskip\noindent{$-$} An $O(n)$-time algorithm that tests whether $G$ admits a \UERII drawing. The algorithm can be adapted to preserve a prescribed external face (\Cref{se:uerii}).

\medskip\noindent{$-$} Polynomial-time testing algorithms for \UER drawings in restricted scenarios (\Cref{se:restricted-scenarios}). We consider drawings without internal degree-3 vertices, and drawings such that removing the external face yields a collection of paths and cycles. The complexity of these algorithms ranges from $O(n)$ to $O(n^5)$, depending on the specific constraints on the external face and on the rotation system.    

\medskip\noindent{$-$} An $O(3^k n^{4.5})$-time testing algorithm for general \UER drawings, where $k$ is the number of degree-3 vertices (\Cref{se:uerr}).

%

\medskip All our algorithms are easily implementable and can preserve a given rotation system for the vertices, if needed. If the recognition is successful, within the same time complexity of the test, they can construct the corresponding drawing.



\section{Basic Definitions}\label{se:preliminaries}

Let $V(G)$ and $E(G)$ denote the set of vertices and the set of edges of a graph~$G$, respectively. Also, for a vertex $v \in V(G)$, 
let $N(v)$ be the set of \emph{neighbors of $v$}, i.e., the set of vertices adjacent to~$v$. The value $|N(v)|$ is the \emph{degree of $v$}, and is denoted by $\deg_G(v)$. For a positive integer $k$, a graph $G$ is a \emph{$k$-graph} if $\deg_G(v) \leq k$ for each $v \in V(G)$.  
If $\deg_G(v)=2$ and $N(v) = \{u,w\}$, \emph{smoothing} $v$ consists of removing $v$ (and its incident edges) from $G$ and adding the edge $(u,w)$ in $G$ (i.e., smoothing a vertex is the reverse of subdividing an edge).

\medskip
\noindent \textbf{Rectilinear Drawings.} 
A \emph{rectilinear drawing} $\Gamma$ of a graph $G$ is a drawing such that: $(i)$ each vertex $v \in V(G)$ is mapped to a distinct point $p_v$ of an integer grid; $(ii)$ each edge $e=(u,v) \in E(G)$ is mapped to either a horizontal or a vertical segment $s_e$ connecting~$p_u$ and~$p_v$; $(iii)$ if $v \in V(G)$ and $e \in E(G)$, $s_e$ does not intersect $p_v$ unless $v$ is an end-vertex of $e$. We can assume $G$ to be a 4-graph, as otherwise it does not have a rectilinear drawing. We will denote by $x(v)$ and $y(v)$ the $x$- and $y$-coordinates of $p_v$.

A \emph{vertex of $\Gamma$} is either a point that corresponds to a vertex of~$G$, in which case it is called a \emph{real-vertex}, or it is a point where two edges of~$G$ cross, in which case it is called a \emph{crossing-vertex}. 
If $\Gamma$ has no crossing-vertices, it is a \emph{planar rectilinear drawing}. An \emph{edge of $\Gamma$} is a portion of an edge of $G$ delimited by two vertices of $\Gamma$, which does not contain other vertices of $\Gamma$ in its interior. An edge of $\Gamma$ coincides with an edge $e$ of $G$ when $e$ does not cross any other edges of $G$ in $\Gamma$. 
Drawing $\Gamma$ divides the plane into connected regions, called \emph{faces}. The \emph{boundary} of each face $f$ is the circular sequence of vertices (either real- or crossing-vertices) and edges of $\Gamma$ that delimit $f$. The unique infinite region is the \emph{external face} of $\Gamma$; the other faces are the \emph{internal faces} of $\Gamma$. 
For a face $f$ of $\Gamma$, we denote by ${\cal B}(f)$ the boundary of $f$. If $f$ is the external face of $\Gamma$, we call ${\cal B}(f)$ the \emph{external cycle of} $\Gamma$.
%

\medskip
\noindent \textbf{Rotation systems.} A \emph{rotation system ${\cal R}(G)$ of $G$} specifies the clockwise order of edges in $E(v)$, for each vertex $v \in V(G)$. ${\cal R}(v)$ denotes the restriction of ${\cal R}(G)$ to $v$. A (rectilinear) drawing $\Gamma$ of $G$ determines a rotation system for $G$; in addition, $\Gamma$ determines the clockwise order of the edges incident to each crossing-vertex. For a given rotation system ${\cal R}(G)$, we say that $\Gamma$ \emph{preserves} ${\cal R}(G)$ if the rotation system determined by $\Gamma$ coincides with ${\cal R}(G)$. 

\medskip
\noindent \textbf{Our models.} We study \emph{\UERlong} (\emph{\UER drawings}), i.e., rectilinear drawings where all edges have unit length and all faces (including the external one) are rectangles.
Note that, in a \UER drawing, the external cycle only contains real-vertices.
Within the \UER model we also define a more restricted model, \emph{\UERIIlong} (\emph{\UERII drawings}), where each face is a square of unit side; see \Cref{fig:crd-examples}. The next simple~property~holds.
%
%

%





\begin{property}\label{pr:external-face}
Let $\Gamma$ be a \UER drawing of a graph $G$, let $C$ be the external cycle of $\Gamma$, and let $c_1, c_2, c_3, c_4$ be the vertices of $C$ at the corners of $\Gamma$. Then: $(i)$ $c_1, c_2, c_3, c_4$ have degree 2 in $G$; $(ii)$ each other vertex of $C$ has degree at most 3 in $G$; $(iii)$ the path in $C$ between two consecutive corners has the same length as the path in $C$ between the other two corners.  
\end{property}

We will assume that the input graph $G$ is biconnected and it is not a cycle, as
biconnectivity is necessary for the existence of \UER drawings, and if $G$ is a cycle the test is trivial: an \UERII drawing exists iff $G$ is a $4$-cycle; an \UER drawing exists iff $G$ has even length. 

\section{Unit Square Faces}\label{se:uerii}

In this section we focus on \UERII drawings. In addition to \Cref{pr:external-face}, we give two other properties of this model; see \Cref{fig:ufr-deg-4-vertices}. The first one 
comes from the fact that unit-square faces cannot contain an angle larger than $90^\circ$.

\begin{property}\label{pr:ufr-deg-4-vertices}
Let $\Gamma$ be a \UERII drawing and let $v$ be a real-vertex of $\Gamma$. If $v$ is not on the external cycle of $\Gamma$ then $\deg_\Gamma(v)=4$. Otherwise, $v$ is either a corner of $\Gamma$ or $\deg_\Gamma(v) = 3$. 
\end{property}

By \Cref{pr:ufr-deg-4-vertices}, the external cycle $C$ of any \UERII drawing has all vertices of degree 3, except for exactly four (the four corners), which have degree 2. 


\begin{property}\label{pr:ufr-paths}
A \UERII drawing contains disjoint vertical (horizontal) paths, connecting each non-corner vertex of the top (left) side with a non-corner vertex of the bottom (right) side, possibly traversing internal (degree-4) real-vertices. 
\end{property}

\begin{figure}[tb]
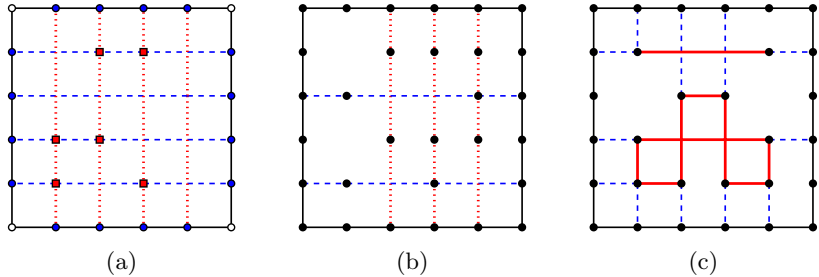

    \centering
    \begin{subfigure}{0.25\textwidth}
            \includegraphics[width=\textwidth,page=5]{./figures/crd-examples.pdf}
            \subcaption{}
            \label{fig:ufr-deg-4-vertices}
    \end{subfigure}
    \hfil
    \begin{subfigure}{0.25\textwidth}
            \includegraphics[width=\textwidth,page=7]{./figures/crd-examples.pdf}
            \subcaption{}
            \label{fig:uerr-proposition}
    \end{subfigure}
    \hfil
    \begin{subfigure}{0.25\textwidth}
            \includegraphics[width=\textwidth,page=9]{./figures/crd-examples.pdf}
            \subcaption{}
            \label{fig:uerr-inner-2-graph}
    \end{subfigure}
    \caption{(a) A \UERII drawing; internal vertices are red squares and external non-corner vertices are blue circles. (b) A \UER drawing without internal degree-3 vertices. 
    (c) A \UER drawing such that removing the external vertices yields a collection of paths and cycles (red and bold).}
    \label{fig:old-figs-2-4-6}
\end{figure}

We exploit \Cref{pr:external-face,pr:ufr-deg-4-vertices,pr:ufr-paths} to devise a linear-time recognition and construction algorithm for \UERII drawings. The algorithm can be adapted to the cases in which the external~cycle of the drawing and/or the rotation scheme of the graph are given. We~prove~the~following.

\begin{theorem}\label{th:uerii}
Let $G$ be an $n$-vertex $4$-graph. There exists an $O(n)$-time algorithm that tests whether $G$ admits a \UERII drawing. If such a drawing exists, the algorithm constructs one. Moreover, the algorithm can be adapted to preserve a given rotation system and/or to have a prescribed external cycle.
\end{theorem}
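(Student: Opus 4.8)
The plan is to characterize \UERII drawings combinatorially via \Cref{pr:external-face,pr:ufr-deg-4-vertices,pr:ufr-paths}, and then to show that such a characterization can be checked — and a drawing built — in linear time. First I would observe that, by \Cref{pr:ufr-deg-4-vertices}, every \UERII drawing is essentially ``grid-like'': the external cycle $C$ consists of exactly four degree-2 corners together with degree-3 vertices, and every vertex not on $C$ has degree~4. By \Cref{pr:ufr-paths}, the drawing is spanned by a family of pairwise-disjoint horizontal paths (each joining a non-corner vertex of the left side to one of the right side) and a family of pairwise-disjoint vertical paths (each joining a non-corner vertex of the top side to one of the bottom side), and every internal vertex is the unique crossing point of one horizontal and one vertical path. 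Hence, if the drawing occupies a $(p+1)\times(q+1)$ grid (so the top and bottom sides each have $p-1$ non-corner vertices and the left and right sides each have $q-1$ non-corner vertices), then $G$ is exactly determined: it is the graph obtained from the $(p+1)\times(q+1)$ grid graph by deleting those internal grid vertices that are not used, where the set of deleted internal vertices must be such that no path is broken — equivalently, each maximal ``missing run'' along a grid line is a single contiguous block touching the boundary is \emph{not} allowed for internal vertices, and more simply: a vertex is present iff it lies on both a surviving horizontal path and a surviving vertical path. I would make this precise and prove that $G$ admits a \UERII drawing if and only if $G$ is isomorphic to such a ``grid graph with an order-closed set of internal vertices removed''; one direction is immediate from the three properties, the other by placing the paths on consecutive grid lines.

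Given this structural characterization, the algorithm proceeds as follows. First, test the necessary degree conditions: $G$ must be biconnected, a $4$-graph, with a vertex set partitioned into internal vertices (degree~4) and external vertices, and the subgraph induced by the degree-${\le}3$ vertices must be a cycle $C$ containing exactly four degree-2 vertices, splitting $C$ into four paths $P_{\mathrm{top}},P_{\mathrm{right}},P_{\mathrm{bot}},P_{\mathrm{left}}$ with $|P_{\mathrm{top}}|=|P_{\mathrm{bot}}|$ and $|P_{\mathrm{left}}|=|P_{\mathrm{right}}|$ by \Cref{pr:external-face}$(iii)$ and \Cref{pr:ufr-deg-4-vertices}. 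Identifying $C$ can be done in $O(n)$ time (it is the set of vertices of degree ${<}4$; verifying it forms a single cycle with the right number of degree-2 vertices is a linear scan). This fixes the grid dimensions $p,q$. Then I would ``grow'' the horizontal paths one at a time starting from each non-corner vertex of $P_{\mathrm{left}}$: from such a vertex, repeatedly follow the unique not-yet-used edge that is ``horizontal'' — at a degree-4 internal vertex there are two choices, but the vertical/horizontal roles can be fixed consistently by a local rule (at each internal vertex, the two paths through it must use opposite pairs of the four incident edges). The key point is that the rotation system at internal vertices is forced up to reflection once we decide which incident edge goes ``up'': enforcing that the four edges around each internal vertex alternate between the two path families gives a 2-coloring constraint on the half-edges that is either globally satisfiable in a unique way (per connected component of the constraint graph) or not. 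Propagating this from the boundary (where horizontal vs.\ vertical is determined by the side of $C$ a vertex lies on) yields, in $O(n)$ total time, either a consistent assignment of coordinates or a certificate of infeasibility. Finally, one checks that the resulting assignment uses distinct grid points and that the reconstructed adjacencies match $E(G)$ exactly; if so, output the straight-line embedding on the integer grid.

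For the variants: if the external cycle is prescribed, we simply verify it equals the computed $C$ (respecting the designated corners), which is an additional $O(n)$ check; if a rotation system ${\cal R}(G)$ is given, the alternation constraint at each internal vertex becomes fully determined rather than merely determined-up-to-reflection, so we only need to verify that ${\cal R}(G)$ is consistent with a valid path assignment — again $O(n)$ — and then construct accordingly. The main obstacle I expect is proving that the local propagation is \emph{globally} sound: one must rule out the possibility that following the forced horizontal/vertical rule produces a combinatorially valid path system that nonetheless cannot be realized with all vertices at distinct grid points (e.g.\ two internal vertices forced onto the same coordinates), or that fails to be a single connected grid-like region. Handling this cleanly requires showing that the path-family structure forces a consistent linear order on the horizontal paths (induced by their order along $P_{\mathrm{left}}$) and on the vertical paths (by their order along $P_{\mathrm{top}}$), and that every internal vertex's two coordinates are then read off unambiguously as (index of its vertical path, index of its horizontal path); the biconnectivity hypothesis is what prevents degenerate splittings. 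Establishing this order-consistency lemma — essentially, that the combinatorial data of the path families already determines a unique grid placement — is the crux of the correctness argument, and everything else is a linear-time bookkeeping exercise.
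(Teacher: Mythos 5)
There is a genuine gap, and it sits exactly where the paper spends most of its effort: identifying the external cycle. You assume that the external cycle $C$ can be read off as ``the subgraph induced by the degree-${\le}3$ vertices, which must be a single cycle, verifiable by a linear scan.'' By \Cref{pr:ufr-deg-4-vertices} the degree-${\le}3$ vertices are indeed exactly the vertices of $C$, but the subgraph they induce is in general \emph{not} a cycle: it is $C$ together with chords, namely edges of $G$ joining two boundary vertices whose drawing runs across the interior passing only through crossing-vertices (recall that \UERII drawings may have crossings, as in \Cref{fig:crd-examples-a}; the simplest chordal example is a drawing two grid rows tall, where every vertical path is a single edge between a top and a bottom vertex). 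Your test would reject such instances outright, and even if you relaxed the test, your proposal contains no mechanism for deciding, at a degree-3 boundary vertex all of whose neighbours also have degree ${\le}3$, which two incident edges lie on $C$ and which one is a chord. This is precisely the problem the paper solves in \Cref{subsec:usf-external-face}: it smooths the non-corner degree-2 vertices, guesses one of at most three circular orders of the four corners, adds the corner edges, and uses planarity plus triconnectivity of the resulting graph $G''$ (which follow from \Cref{pr:ufr-paths}, since chords only join opposite sides) to extract, from the unique planar embedding, at most two candidate external cycles. Without some argument of this kind your algorithm is incorrect on a whole family of yes-instances, not merely incomplete.

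A secondary weakness: your structural characterization (``$G$ is a grid graph with some internal vertices deleted, a vertex is present iff it lies on a surviving horizontal and vertical path'') does not match the model — unused internal grid points are crossings of two edges of $G$, so $G$ is obtained from the grid by replacing some degree-4 vertices with a pair of crossing edges, not by deletion — and the global consistency statement you yourself flag as the crux (that the local horizontal/vertical alternation propagates to a unique, collision-free coordinate assignment) is asserted rather than proved. The paper avoids needing such a lemma by its incremental placement in \Cref{subsec:usf-internal}: it sweeps the top side left to right, builds each vertical path downward, and uses the invariant that a vertex's bottom neighbour must already have its left neighbour assigned (while its right neighbour must not) to make every choice forced and to verify consistency on the fly. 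Your second phase is in the same spirit and could likely be repaired along those lines, but as written the correctness argument for it is incomplete, and the external-cycle identification is the step that genuinely fails.
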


To prove \Cref{th:uerii} we describe an algorithm that attempts to construct a \UERII drawing $\Gamma$ of $G$ in two steps: First, it detects the external cycle $C$ and draws a rectangle $\Gamma_C$ representing $C$ (\Cref{subsec:usf-external-face}); then, it places the remaining vertices of $G$ in the interior of $\Gamma_C$ (\Cref{subsec:usf-internal}). If any of these two steps fails, the algorithm rejects the instance.

\subsection{Choosing and drawing the external face}\label{subsec:usf-external-face}

By \Cref{pr:external-face,pr:ufr-deg-4-vertices}, the external cycle $C$ must be composed only of degree-3 vertices, except for exactly four degree-2 vertices; also, the vertices of $G \setminus C$ must have degree 4. To find $C$, we first check whether $G$ has exactly four degree-2 vertices, which we call \emph{corners}, and then  remove all the degree-4 vertices from $G$; we denote by $G'$ the resulting graph. 

Since $C$ must be a spanning subgraph of $G'$, the graph $G'$ must be biconnected; if this is not the case, we can reject the instance. Note that $G'$ consists of the four corners, some other degree-2 vertices (whose third neighbor in $G \setminus G'$ has degree 4 in $G$), and some degree-3 vertices. For the degree-2 vertices of $G'$, both their incident edges must be part of $C$. For each degree-3 vertex $v$ of $G'$, we must decide which pair of edges incident to $v$ will belong to $C$ and which one will be a chord of $C$. If $G'$ has no degree-3 vertex, we can set $C=G'$. 

If $G'$ has some degree-3 vertices, we smooth the degree-2 vertices of $G'$ that are not corners; the resulting graph $G''$ is still biconnected and has all degree-3 vertices except the four corners.
We then decide the order in which the corners appear along $C$. Note that, if two corners are adjacent in $G''$, they must be consecutive along $C$. We guess each of the possible circular orders that is consistent with the possible adjacencies. There are at most three circular orders to consider, as we do not need to distinguish between a clockwise and an equivalent counterclockwise order. 

Let $c_1, c_2, c_3, c_4$ be the four corners as they appear in one of the guessed orders. We add to $G''$ the \emph{corner edges} $(c_1, c_2)$, $(c_2, c_3)$, $(c_3, c_4)$, and $(c_4, c_1)$, if not already present.
We claim that, for this order to be compatible with a \UERII drawing, $G''$ must have a special structure, which we will exploit to find $C$. Namely, let $C^*$ be any external cycle of a \UERII drawing of $G$ in which the four corners are $c_1, c_2, c_3, c_4$ in this circular order. Let $P_i$ be the path along $C^*$ between $c_i$ and $c_{i+1}$, for $i \in \{1,2,3\}$, and let $P_4$ the one between $c_4$ and $c_1$. By \Cref{pr:ufr-paths}, all chords of $C^*$ connect either a vertex of $P_1$ to a vertex of $P_3$, or a vertex of $P_2$ to a vertex of $P_4$ (see \Cref{fig:planar}). 
%
This implies that $G''$ must be planar and triconnected, so if this is not the case we reject the current guess. Otherwise, $G''$ admits a unique planar rotation system, up to a flip. In a planar embedding preserving this rotation system, the cycle $C$ to be selected as external cycle must be a concatenation of four paths, each sharing a face with a corner edge. This reduces the number of possible cycles to test to at most 16 (two candidate faces for each corner edge). However, each of the two faces incident to a corner edge shares an edge with a face incident to another corner edge. Thus, we must test only two disjoint cycles. \Cref{fig:two-ext-faces} shows an example in which both such cycles are valid; the two drawings have the same rotation~system.

\begin{figure}[tb]
    \centering
    \begin{subfigure}{0.3\textwidth}
            \includegraphics[width=\textwidth,page=6]{./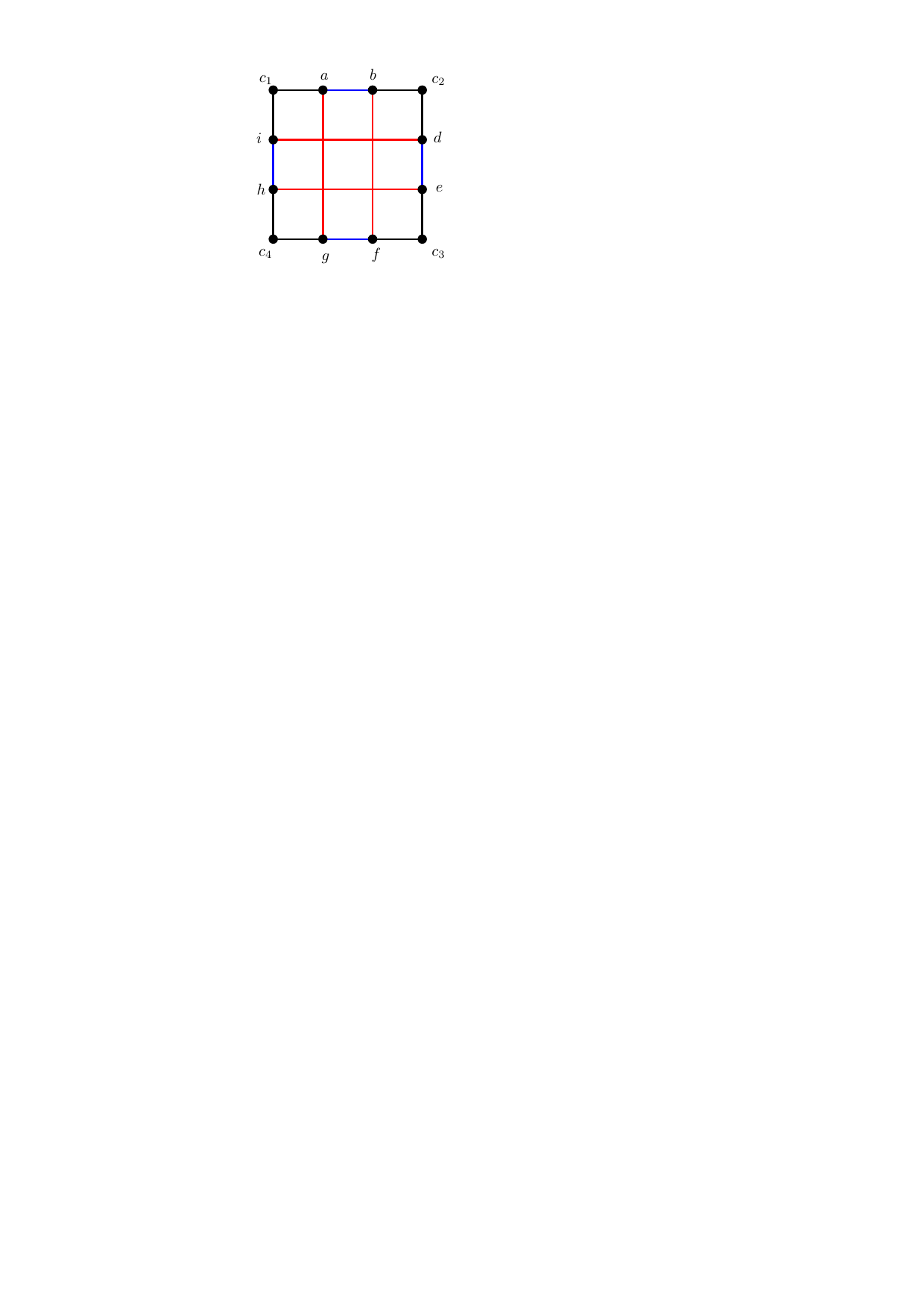}
            \subcaption{}
            \label{fig:planar}
    \end{subfigure}
    \hfil
    \begin{subfigure}{0.22\textwidth}
            \includegraphics[width=\textwidth,page=4]{./figures/multiple-options.pdf}
            \subcaption{}
            \label{fig:ext-face-1}
        \end{subfigure}
        \hfil
        \begin{subfigure}{0.22\textwidth}
            \includegraphics[width=\textwidth,page=5]{./figures/multiple-options.pdf}
            \subcaption{}
            \label{fig:ext-face-2}
        \end{subfigure}
    \caption{A graph admitting two possible \UERII drawings, with the same rotation system but different  external boundary. Corner edges are  grey.}
    \label{fig:two-ext-faces}
\end{figure}


To conclude, for each of the three possible guessed orders of corners we have obtained at most two candidate cycles of $G$. Let $C$ be one of such cycles. We check whether $C$ satisfies  Condition $(iii)$ of \Cref{pr:external-face} in linear time by traversing $C$; if not, we remove $C$ from the list of candidates. Otherwise, we draw $C$ as a rectangle with the four degree-2 vertices as corners and with all edges having unit length. Then, for each of the constantly many rectangles obtained in this step, we run the algorithm described in \Cref{subsec:usf-internal}.

\subsection{Drawing the internal vertices}\label{subsec:usf-internal}

Let $\Gamma_C$ be a rectangle representing a cycle $C$ obtained from the algorithm in \Cref{subsec:usf-external-face}. To place the vertices of $G \setminus C$ in the interior of $\Gamma_C$, we aim to construct the disjoint vertical/horizontal paths between pairs of opposite vertices of $C$ dictated by \Cref{pr:ufr-paths}. 
To this aim, we consider the vertices on the top side of $\Gamma_C$ in left-to-right order. When considering one of them, call it $u$, we assume as an invariant that all the vertical paths with $x$-coordinate smaller than $x(u)$ have already been drawn. 
Moreover, for each vertex $v$ already placed to the left of $u$, we assume that we have assigned a neighbor of $v$ to each direction (top/bottom/left/right)
including its right and bottom neighbors, even if they have not been placed yet. When we assign a vertex~$v$ to be the top/bottom/left/right neighbor of another vertex~$w$, we implicitly assume that~$w$ is assigned as the bottom/top/right/left neighbor of~$v$. As such, some of the vertices that have not been placed yet may have their left and/or top neighbor already assigned. We proceed as follows.

\medskip
\noindent \textbf{First Step.} 
We consider the vertices of $C$ that lie along the left side of $\Gamma_C$; let $v$ be one of them. Clearly, $v$ will not have any left neighbor and its top/bottom neighbors must be its immediate predecessor/successor along $C$ (walking on $C$ counterclockwise), so we can assign its unique remaining neighbor $w$ as its right neighbor. If $w$ belongs to $C$ and is not the unique vertex on the right side of $\Gamma_C$ with $y(w)=y(v)$, we reject $\Gamma_C$. Otherwise, the assignment satisfies the invariant.

\medskip
\noindent \textbf{General Step.}
We consider the leftmost vertex $u$ on the top side of $\Gamma_C$ that has not been considered yet. It has no top neighbor, whereas its left and right neighbors must be its immediate predecessor/successor along~$C$. So, we have to assign its other neighbor $w$ as its bottom neighbor. If the top neighbor of $w$ had already been assigned, or if $w$ belongs to $C$ and is not the unique vertex on the bottom side of $\Gamma_C$ with $x(w)=x(v)$, we reject~$\Gamma_C$. 

If $w \in C$, and thus lies on the bottom side of $\Gamma_C$, it has no bottom neighbor and its left/right neighbors are determined by $C$, so the invariant is satisfied. Having completed the vertical path starting at $u$, the algorithm restarts from the next vertex on the top~side. 

If $w \notin C$, we describe how to check and assign the neighbors of $w$ distinct from $u$. By the invariant, the left neighbor of $w$ has been already decided in a previous step. If this is not the case, we can reject $\Gamma_C$, otherwise we can place $w$ at $x$-coordinate $x(u)$ and $y$-coordinate equal to its left neighbor.
It remains to assign the other two neighbors of $w$, call them $w_1$ and $w_2$, as bottom and right neighbors of $w$. If one of them has already been placed, then it must be either the vertex of~$C$ on the bottom side of $\Gamma_C$ with $x$-coordinate $x(w)$, or the vertex of~$C$ on the right side of $\Gamma_C$ with $y$-coordinate $y(w)$, as otherwise we can reject~$\Gamma_C$. In any positive case, we can uniquely identify the bottom and right neighbors of $w$.
Assume vice versa that neither $w_1$ nor $w_2$ has already been placed. The key observation is that the vertex to be chosen as the bottom neighbor of $w$ must have its left neighbor already assigned, by the invariant. On the contrary, the vertex to be chosen as right neighbor should not have its left neighbor assigned, as we need to assign $w$ to this role. We check whether exactly one of $w_1$ and $w_2$ has its left neighbor assigned, and reject $\Gamma_C$ otherwise. If so, we can uniquely assign the bottom and right neighbors of $w$. Hence, the invariant holds in all cases. The algorithm continues with the one, $w_1$ or $w_2$, that has been selected as the bottom neighbor of $w$, which is processed in the same way as $w$.

\medskip

\noindent{\bf Proof of \Cref{th:uerii}.}
If all vertices on the top side of $\Gamma_C$ have been processed without rejecting $\Gamma_C$, we have a \UERII drawing $\Gamma$ of $G$, as each step of the algorithm fulfills the properties of \Cref{pr:external-face,pr:ufr-deg-4-vertices,pr:ufr-paths}. 
All rows and columns in~$\Gamma$ are covered by a path connecting two external degree-3 vertices, hence every
face of $\Gamma$ is a unit square. 
All operations described in \Cref{subsec:usf-external-face} can be performed in linear time, including the tests for planarity \cite{HopcroftT74,p-pte-13}, biconnectivity, and triconnectivity \cite{DBLP:journals/siamcomp/HopcroftT73}. 
Processing a vertex in the algorithm of \Cref{subsec:usf-internal} requires a constant number of operations on it and on its (at most four) neighbors. Thus, the algorithm runs in linear time. 
Finally, if the external cycle is prescribed, we check that it satisfies the required conditions, skipping the steps in \Cref{subsec:usf-external-face}. If the rotation system is given, we check that it is consistent with the unique rotation system of $G''$ in the algorithm of \Cref{subsec:usf-external-face}, and with the choices performed by the algorithm in \Cref{subsec:usf-internal}, which are unique in all~cases. 

\section{Rectangular Faces: Restricted Scenarios}\label{se:restricted-scenarios}
In this section we deal with \UER drawings and consider two different restricted scenarios for which we can provide efficient recognition and layout algorithms. In \Cref{se:uerr-no-internal-3} we focus on \UER drawings where the degree-3 vertices appear only on the external face. In \Cref{se:uerr-inner-2-graph} we take a further step towards the general case, by allowing internal degree-3 vertices, but requiring that the removal of the external cycle yields a collection of paths and cycles.

\subsection{No Internal Degree-3 Vertex}\label{se:uerr-no-internal-3}

Differently from \UERII drawings studied in \Cref{se:uerii}, this setting allows internal degree-2 vertices.  
Also, we may have more than four external degree-2 vertices, which makes the selection of the corners and the external cycle more challenging.
We start with a property that extends \Cref{pr:ufr-paths}, coming from the fact that each degree-4 (real- or crossing-) vertex  has four $90^\circ$ angles in its incident faces, whereas degree-2 vertices must have two $180^\circ$ angles; see \Cref{fig:uerr-proposition}.

\begin{property}\label{pr:uerr-paths}
A \UER drawing without internal degree-3 vertices contains disjoint vertical (horizontal) paths connecting each degree-3 vertex on the top (left) side with a degree-3 vertex on the bottom (right) side, possibly traversing internal degree-4 vertices. Furthermore, for each degree-2 vertex $v$ in a vertical (horizontal) path there is a degree-2 vertex in the left and right (top and bottom) sides of the rectangle representing the external cycle, excluding the corners, having the same $y$-coordinate ($x$-coordinate) as $v$.
\end{property}

In the following we exploit \Cref{pr:uerr-paths} to give several polynomial-time testing and construction algorithms for the problem, based on whether the corners, the external cycle, and/or the rotation system are part of the input.

\begin{theorem}\label{th:uerr-no-internal-3}
    Let $G$ be an $n$-vertex 4-graph. There exists a polynomial-time algorithm that tests whether $G$ admits a \UER drawing with no internal degree-3 vertex. If such a drawing exists, the algorithm constructs one. Moreover, the algorithm can be adapted to preserve a given rotation system and/or to have a prescribed external cycle.
    The time complexity of the algorithm is:
    \begin{enumerate*}
        \item $O(n)$ if the four corners are given;
        \item $O(n)$ if the rotation system is prescribed;
        \item $O(n)$ if $G$ is a 3-graph and the external cycle is prescribed;
        \item $O(n^3)$ if the external cycle is prescribed; and
        \item $O(n^5)$ in the general case.
    \end{enumerate*}
\end{theorem}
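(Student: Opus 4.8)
The plan is to reduce each of the five cases to a common core: given a candidate external cycle $C$ drawn as a rectangle $\Gamma_C$ with unit edges, decide whether the vertices of $G\setminus C$ can be placed inside so that every internal real-vertex has degree~$4$, every degree-$2$ vertex gets two straight ($180^\circ$) angles, and all faces are rectangles. By Property~\ref{pr:uerr-paths} the interior decomposes into disjoint maximal vertical paths, one starting at each degree-$3$ vertex of the top side and ending at a degree-$3$ vertex of the bottom side, possibly crossing internal degree-$4$ vertices; the horizontal structure is analogous. So the core routine is essentially the sweep of Section~\ref{subsec:usf-internal}, but now a vertical path may contain internal degree-$2$ vertices, and Property~\ref{pr:uerr-paths} tells us exactly how those are constrained: a degree-$2$ vertex at $y$-coordinate $j$ forces matching degree-$2$ vertices on the left and right sides of $\Gamma_C$ at height $j$. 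First I would formalize this sweep: process the top-side degree-$3$ vertices left to right, maintaining the invariant that all columns strictly to the left are completed and that every already-placed vertex has each of its four (or two) port directions assigned a neighbor, possibly a not-yet-placed one. Each step walks a column downward; at a degree-$4$ internal vertex the continuation is forced by the invariant exactly as in Section~\ref{subsec:usf-internal}; at a degree-$2$ internal vertex the column simply continues straight down, and we check consistency with the prescribed left/right degree-$2$ vertices at that height. The routine rejects whenever a forced assignment conflicts with a previous one or with $C$, and otherwise produces the drawing. Correctness is that every rectangular face is swept out by a pair of crossing paths, so faces are rectangles, and unit length is immediate from the grid placement.

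Given the core routine, the five running times follow by bounding how many candidate external cycles must be fed to it. For case~(1), when the four corners $c_1,c_2,c_3,c_4$ are given, Property~\ref{pr:external-face}(iii) and the structure of Property~\ref{pr:uerr-paths} pin down $C$ essentially uniquely up to the (constantly many) cyclic orders of the corners: the four corner-to-corner paths along $C$ must be the paths realizing the vertical/horizontal path system, and chords must connect opposite paths, so $C$ is obtained by a single linear-time search (analogous to the $G''$ analysis of Section~\ref{subsec:usf-external-face}, but now degree-$2$ internal vertices are smoothed first and re-inserted on their forced rows/columns). For case~(2), a prescribed rotation system fixes, at every degree-$3$ vertex, which two incident edges form the $180^\circ$ angle versus which one is the ``branch'', hence fixes which degree-$3$ vertices are corners and in which cyclic order, and then also fixes every branching decision inside the sweep; so again one linear-time run suffices. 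For case~(3), if $G$ is a $3$-graph and $C$ is prescribed, there are no internal degree-$4$ vertices, so each vertical path from a top degree-$3$ vertex is a straight chain of internal degree-$2$ vertices down to a bottom degree-$3$ vertex; matching top to bottom degree-$3$ vertices and left to right degree-$3$ vertices is forced by Property~\ref{pr:uerr-paths}, and a single $O(n)$ pass checks feasibility and lengths.

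Cases~(4) and~(5) are where the work is. For case~(4), the external cycle $C$ is given but we do not know which four of its (possibly many) degree-$2$ vertices are the corners. There are $\Theta(n^4)$ choices of an ordered $4$-tuple of degree-$2$ vertices on $C$, but Property~\ref{pr:external-face}(iii) forces the two pairs of opposite corner-to-corner subpaths of $C$ to have equal length, and for a fixed choice of one corner the remaining three are then constrained; I would argue that only $O(n^3)$ essentially different corner assignments survive these parity/length constraints, and each is checked in $O(n)$ by the core routine, giving $O(n^3)$ overall (with care, one can likely precompute prefix lengths along $C$ so the constant-factor bookkeeping per guess is $O(1)$ amortized, but $O(n)$ per guess already suffices). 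For case~(5), nothing is prescribed; here I would first enumerate candidate external cycles. Unlike the square-face case, $C$ need not be a spanning subgraph, but every vertex of $C$ has degree $2$ or $3$ in $G$, the four corners have degree $2$, and — crucially — removing $C$ leaves an interior that is swept by the vertical/horizontal path system, which imposes strong structure: each ``thread'' of that system connects two degree-$3$ vertices of $C$, and interior degree-$2$ vertices lie on threads whose endpoints on $C$ are pinned by Property~\ref{pr:uerr-paths}. I would show that the plausible external cycles, together with their corner $4$-tuples, can be enumerated in $O(n^5)$ total (an $O(n)$ factor over case~(4) to account for not knowing $C$ itself — concretely, guessing one edge and one corner of $C$ and then growing $C$ greedily along forced choices), running the $O(n)$ core routine on each.

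The main obstacle I anticipate is case~(5): proving that the number of candidate $(C, c_1,c_2,c_3,c_4)$ pairs one must test is only $O(n^4)$ — so that with the $O(n)$ core routine the total is $O(n^5)$ — rather than exponential. The leverage is that $C$ together with its four corners is heavily overdetermined by the two path systems of Property~\ref{pr:uerr-paths}: the multiset of row-heights at which degree-$2$ vertices occur on the left side must equal that on the right side, similarly for top/bottom, and the threads crossing at interior degree-$4$ vertices link these together rigidly. The careful part of the proof is turning this rigidity into an explicit bounded-branching procedure and verifying that every rejection it performs is sound and every surviving drawing genuinely satisfies Properties~\ref{pr:external-face} and~\ref{pr:uerr-paths}; the running-time bookkeeping for the other four cases is then routine.
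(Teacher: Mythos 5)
Your two-phase skeleton (enumerate candidate external rectangles, then run a linear-time forced sweep to place the interior) is the same as the paper's, and your core routine and your treatment of case (1) essentially match \Cref{le:external-corners-given} and \Cref{subsec:uerr-no-internal-3-drawing}. However, there are genuine gaps in the remaining cases. For case (2) your key claim is false: a rotation system only gives the cyclic order of the (at most three) edges around a degree-3 vertex, and for three edges this cyclic order does \emph{not} determine which pair forms the $180^\circ$ angle; moreover corners are degree-2 vertices by \Cref{pr:external-face}, so the rotation system cannot ``fix which degree-3 vertices are corners''. Without that, your case (2) has no mechanism for finding either the external cycle or the corners. The paper instead guesses which of the three faces around some degree-3 vertex is external (constant branching), traverses $C$ edge-by-edge using the rotation system, and then locates the corners by a counting argument that aligns the degree-4 vertices of an interior path with degree-3 vertices of the sides and its degree-2 vertices with degree-2 side vertices (\Cref{le:external-rotation-given}); the same counting argument, in its 3-graph form, is exactly what your case (3) is missing --- you assert ``a single $O(n)$ pass'' but never say how the four corners on the prescribed cycle are determined, which is the entire content of \Cref{le:external-deg3-cycle-given}.

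Cases (4) and (5) also do not go through as written. In case (4) your bookkeeping is inconsistent: $O(n^3)$ surviving corner assignments each checked by an $O(n)$ routine gives $O(n^4)$, not the claimed $O(n^3)$; the correct count is that guessing two \emph{adjacent} corners ($O(n^2)$ choices) determines the other two via Condition (iii) of \Cref{pr:external-face}, and $O(n^2)\cdot O(n)=O(n^3)$. In case (5) your proposal to ``guess one edge and one corner of $C$ and grow $C$ greedily along forced choices'' fails because at a degree-3 vertex of $C$ nothing forces which of the two remaining edges continues along $C$ and which becomes a chord (that is precisely the information a rotation system or large-angle assignment would supply), and you yourself flag the resulting branching bound as an unresolved obstacle. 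The paper avoids this entirely: it guesses all $O(n^4)$ corner $4$-tuples and reuses the corners-given machinery, in which (after deleting degree-4 vertices and smoothing non-corner degree-2 vertices, and adding corner edges) the auxiliary graph must be planar and triconnected, so its unique embedding yields at most two candidate external cycles per corner order; this is the missing idea that makes the $O(n^5)$ bound immediate.
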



As in \Cref{se:uerii}, the algorithm supporting \Cref{th:uerr-no-internal-3} is composed of two steps, the first one to select and draw the external cycle, and the the second one to draw the internal vertices. In \Cref{subsec:uerr-no-internal-3-external-face} we describe several algorithms for the first step, depending on the specific setting. The algorithm to draw the internal vertices in the interior of a drawing of the external cycle is the same in all settings, and is described in \Cref{subsec:uerr-no-internal-3-drawing}.

\subsubsection{Choosing and drawing the external face.}\label{subsec:uerr-no-internal-3-external-face}

We start with the case in which the corners of the rectangle representing the external cycle are prescribed as part of the input (Item $1$ of \Cref{th:uerr-no-internal-3}).

\begin{restatable}{lemma}{leExternalCornersGiven}\label{le:external-corners-given}
Let $G$ be an $n$-vertex 4-graph and let the corners of $G$ be given. There exists an $O(n)$-time algorithm that constructs a constant number of rectangles such that if $G$ admits a \UER drawing with no internal degree-3 vertex and with the prescribed corners, then the drawing of the external cycle coincides with one of these rectangles.
\end{restatable}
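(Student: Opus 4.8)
The plan is to fix the four given corners $c_1,c_2,c_3,c_4$, guess their circular order around the external cycle (at most three non-equivalent orders), and for each such order show that the rectangle representing the external cycle is determined up to a constant number of choices. So assume the corners appear in the circular order $c_1,c_2,c_3,c_4$, and let $P_1,P_2,P_3,P_4$ be the four paths of the external cycle $C$ connecting consecutive corners, so that $P_1,P_3$ are ``vertical sides'' and $P_2,P_4$ are ``horizontal sides'' (or vice versa; this is one further binary choice). By Condition $(iii)$ of \Cref{pr:external-face}, $|P_1|=|P_3|$ and $|P_2|=|P_4|$, which fixes the aspect ratio of the rectangle once the sides are known; hence it suffices to identify which edges of $G$ lie on $C$.

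First I would walk along $C$ starting from $c_1$. Every internal vertex of a side $P_i$ has degree $2$ or $3$ in $G$: a degree-$2$ vertex forces both its incident edges onto $C$, so the walk continues deterministically; at a degree-$3$ vertex $v$ on a side we must decide which two of its three incident edges lie on $C$ (the third being a chord). Here I would use \Cref{pr:uerr-paths}: the chord at such a $v$ must start a vertical path (if $v$ is on a horizontal side) or a horizontal path (if $v$ is on a vertical side) that ends at a degree-$3$ vertex on the opposite side, and these paths are pairwise disjoint and traverse only internal degree-$4$ vertices. The key structural observation — analogous to the triconnected-planar argument in \Cref{subsec:usf-external-face} — is that after deleting all internal degree-$2$ and degree-$4$ vertices (smoothing where appropriate) and keeping only degree-$3$ vertices together with the corner edges $(c_i,c_{i+1})$, the resulting graph $G''$ is forced to be planar and triconnected; if it is not, reject this guess. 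When $G''$ is triconnected it has a unique embedding up to reflection, and $C$ must be a concatenation of four subpaths each of which shares a face with the corresponding corner edge $(c_i,c_{i+1})$. As in the \UERII case, this leaves only a constant number (at most two) of candidate cycles $C$ per guessed order.

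For each candidate $C$ produced this way I would (i) verify Condition $(iii)$ of \Cref{pr:external-face} by a single traversal, (ii) check that $C$ is a spanning subgraph of the subgraph of $G$ that must lie on the boundary (i.e., that every vertex not selected as internal actually ends up on $C$), and (iii) draw $C$ as the axis-parallel rectangle with $c_1,c_2,c_3,c_4$ at the corners and all edges of unit length; discard $C$ if any of these checks fails. Since there are $O(1)$ guessed circular orders, $O(1)$ side-assignments, and $O(1)$ candidate cycles per guess, we output a constant number of rectangles, and all the operations above — smoothing, the planarity and triconnectivity tests \cite{HopcroftT74,DBLP:journals/siamcomp/HopcroftT73}, the traversals — run in $O(n)$ time in total. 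By construction, any \UER drawing of $G$ with no internal degree-$3$ vertex and with the prescribed corners realizes its external cycle as one of the rectangles we have generated.

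The main obstacle I anticipate is arguing rigorously that $G''$ must be planar and triconnected: one has to show that \emph{every} chord of the external cycle of a valid drawing joins $P_1$ to $P_3$ or $P_2$ to $P_4$ (so that no two chord-paths ``interleave'' along $C$), which is where \Cref{pr:uerr-paths} does the heavy lifting, and then that the degree-$3$ subgraph with the corner edges added inherits triconnectivity from the rigidity of the grid layout. A secondary subtlety is the bookkeeping for internal degree-$2$ vertices: a side $P_i$ may contain degree-$2$ vertices whose third-neighbor side obligations (the matching degree-$2$ vertices on the opposite side, guaranteed by the second sentence of \Cref{pr:uerr-paths}) must be checked for consistency before committing to a candidate $C$; I would handle this with the same left-to-right sweep used in \Cref{subsec:usf-internal}, deferring the full verification to the internal-drawing phase of \Cref{subsec:uerr-no-internal-3-drawing}.
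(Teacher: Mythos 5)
Your proposal is correct and follows essentially the same route as the paper: reuse the external-cycle algorithm of \Cref{subsec:usf-external-face} with the prescribed corners (smooth non-corner degree-2 vertices, remove degree-4 vertices, guess the at most three corner orders, add corner edges, test planarity and triconnectivity, and read off at most two candidate cycles per order from the unique embedding), then verify Condition $(iii)$ of \Cref{pr:external-face} and draw the unit-length rectangles. The subtlety you flag about chords arising from smoothed internal degree-2 vertices is exactly the one the paper addresses, and it resolves it the same way you do, by deferring that verification to the internal-drawing phase.
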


\begin{proof}
We employ the same algorithm as the one in \Cref{subsec:usf-external-face}, using the four vertices prescribed in the input as the four corners. Namely, we remove the degree-4 vertices and smooth the degree-2 vertices different from the corners. This yields a graph $G''$ with all degree-3 vertices, except for the corners. We then guess all the possible orders of the four corners, add the edges between them according to this order, and test whether the resulting graph is triconnected and planar. If so, from the unique planar embedding, we read the at most two possible cycles that can be used as external cycle in  the desired \UER drawing. The only technical difference is that, in this case, the edges of $G''$ that are selected as chords may have been obtained by smoothing degree-2 vertices, while in \Cref{subsec:usf-external-face} these edges had to be present as edges also in the original graph $G$, since in that case the smoothed degree-2 vertices could not be placed in the interior of the drawing. However, in the algorithm in \Cref{subsec:usf-external-face} we did not exploit this property in this step, and deferred the discovery of this potential problem to \Cref{subsec:usf-internal}, where the internal vertices are placed. As such, the algorithm can be used without modification also in this case.
\end{proof}

We then consider the case where the external cycle $C$ is prescribed, but not the corners. 
In this case, by Condition (iii) of \Cref{pr:external-face}, deciding two adjacent corners allows us to infer the other two. Thus, one can guess $O(n^2)$ pairs of degree-2 vertices of $C$ and get the corresponding $O(n^2)$ candidate rectangles (Item $4$ of \Cref{th:uerr-no-internal-3}).
In the following we show that, if we further have that $G$ does not contain degree-4 vertices, then the rectangle corresponding to the given cycle (if any) is unique (Item $3$ of \Cref{th:uerr-no-internal-3}). 

\begin{lemma}\label{le:external-deg3-cycle-given}
    Let $G$ be an $n$-vertex 3-graph and let the external cycle $C$ of $G$ be given. There exists an $O(n)$-time algorithm that constructs a rectangle $\Gamma_C$ such that, if $G$ admits a \UER drawing with no internal degree-3 vertex and with $C$ as external cycle, then the drawing of the external face coincides with $\Gamma_C$.
\end{lemma}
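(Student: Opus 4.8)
The plan is to build the rectangle $\Gamma_C$ directly from the prescribed cycle $C$, exploiting the fact that in a $3$-graph all vertices on $C$ have degree at most $3$, and that the chords of $C$ (edges of $G$ not on $C$) together with \Cref{pr:uerr-paths} impose a rigid structure. First I would observe that since $G$ is a $3$-graph with no internal degree-$3$ vertices allowed, and $C$ is prescribed, every degree-$3$ vertex of $G$ must lie on $C$ and have exactly one incident chord. Each such chord, by \Cref{pr:uerr-paths}, must be the first edge of a straight vertical or horizontal path through the interior; the endpoints of a chord are two degree-$3$ vertices of $C$ that are forced to be placed on opposite sides of the bounding rectangle at the same coordinate. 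Moreover, a degree-$2$ vertex of $C$ is either a corner or, if it carries a degree-$2$ internal path, it is matched with a degree-$2$ vertex on the opposite side at the same coordinate; but the second sentence of \Cref{pr:uerr-paths} is only a necessary condition we will re-check later, so at this stage I would focus on the degree-$3$ structure.

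The key step is to show that the assignment of vertices of $C$ to the four sides of the rectangle, and hence the rectangle itself, is forced. Walk around $C$; the four corners are degree-$2$ vertices, and between two consecutive corners lies one side. I would argue that the side containing a given vertex $v$ of $C$ is determined: if $v$ has degree $3$, its chord goes to a vertex $v'$ on the opposite side, and the orientation (whether $v$ is on the top/bottom pair or the left/right pair) is then propagated by the requirement that consecutive vertices on $C$ that are not corners lie on the same side. Concretely, I would start from the chords: contract $C$ to a $4$-cycle abstractly, note that the chords of $C$ must, as in the triconnected planar argument used for \UERII, separate the degree-$3$ vertices into two ``opposite'' classes; since $G$ is a $3$-graph this bipartition of the chord endpoints, combined with the cyclic order along $C$, pins down exactly where each corner must sit (a corner must be a degree-$2$ vertex at which the ``side class'' switches, and there are exactly four such switch points, hence the corners and the four side-paths are uniquely determined). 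Once the four side-paths $P_1,P_2,P_3,P_4$ are identified, Condition $(iii)$ of \Cref{pr:external-face} forces $|P_1|=|P_3|$ and $|P_2|=|P_4|$; if this fails, reject. Otherwise $\Gamma_C$ is the unique axis-aligned rectangle of those side lengths with unit edges, and we check in linear time that chord endpoints land at equal coordinates on opposite sides and that the necessary coordinate-matching for degree-$2$ vertices from the second part of \Cref{pr:uerr-paths} holds; any violation means no drawing with this $C$ exists, so $\Gamma_C$ (or the rejection) is correct.

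The main obstacle I anticipate is proving that the corners are forced, i.e.\ ruling out the $O(n^2)$ ambiguity that the prescribed-cycle-only case (Item $4$) genuinely suffers. The argument must use the absence of degree-$4$ vertices in an essential way: with degree-$4$ vertices present, a vertical path can enter the interior and a chord of $C$ need not be a ``through'' edge in a naive sense, and a degree-$2$ vertex on $C$ could be an ordinary side vertex or a corner with no local obstruction. Without degree-$4$ vertices, I would show that every chord is literally a horizontal or vertical unit segment between its two $C$-endpoints (since a degree-$3$ vertex on $C$ has its two $C$-edges collinear along the side, forcing the chord to be perpendicular and, having unit length with both endpoints on $C$, to directly connect two opposite sides). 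This rigidity is what eliminates the guessing: the chord structure is an explicit perfect matching between the two opposite side-pairs' degree-$3$ vertices, so the side-partition of $C$ is read off deterministically, and with it the corners. I would then just need to check carefully the degenerate cases: when $C$ has no chords at all (then $G=C$, contradicting the standing assumption that $G$ is not a cycle, so there is at least one chord and the argument applies), and when a side has length $1$ or a corner is forced to coincide with a chord endpoint of degree $3$ (impossible by \Cref{pr:external-face}$(i)$, giving an immediate rejection). All of this is linear-time: one traversal of $C$ to classify vertices and find the switch points, one pass over the chords to verify the matching and assign coordinates, one check of Condition $(iii)$.
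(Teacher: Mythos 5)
Your overall framing (internal connections must be drawn as straight vertical or horizontal paths joining opposite sides, corners must be degree-2 vertices, reject via \Cref{pr:external-face}) matches the paper, but the step on which everything hinges does not go through. A first problem is your claim that, absent degree-4 vertices, every chord is ``literally a horizontal or vertical unit segment between its two $C$-endpoints'': in this scenario $G\setminus C$ may contain (many) degree-2 vertices, so the degree-3 vertices of $C$ are joined by internal \emph{paths} of arbitrary length, not by single edges, and an edge can even be drawn longer than one unit because crossings subdivide it into unit segments. This part is repairable (work with the set $\mathcal{P}$ of internal paths, as the paper does), but it shows the ``rigidity'' you invoke is weaker than you assume.

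The genuine gap is the determination of the corners. Your argument is that a corner is ``a degree-2 vertex at which the side class switches,'' but the switch between the top/bottom class and the left/right class happens somewhere inside a maximal run of consecutive degree-2 vertices of $C$, and such a run can be arbitrarily long; locally, every vertex of the run is a candidate corner, so the four switch \emph{regions} are determined but the four corner \emph{vertices} are not. Falling back on Condition $(iii)$ of \Cref{pr:external-face} does not rescue this: shifting all four corners along their runs by the same offset preserves both equalities of opposite side lengths, so in general several corner placements pass that check, and your algorithm has no way to output the one rectangle the lemma demands. What is missing is the counting argument the paper uses: choose a path $P\in\mathcal{P}$ with endpoints $u,v$ such that one arc $C_R$ of $C$ between them contains no two vertices joined by a path of $\mathcal{P}$; then each of the $k$ degree-3 vertices of $C_R$ sends a path that must cross $P$, so by \Cref{pr:uerr-paths} the side of the rectangle contained in $C_R$ has exactly $|P|+k$ vertices, which pins the two corners on $C_R$ at distance $\tfrac{|C_R|-(|P|+k)}{2}$ from $u$ and $v$ (and symmetrically on $C_L$). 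Without this quantitative step, the corner positions—and hence $\Gamma_C$—are not forced by your argument.
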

\begin{proof}
    Since $G$ has no degree-4 vertex and all its degree-3 vertices are on the external cycle~$C$, we have that $G$ consists of $C$ and of a set $\mathcal{P}$ of disjoint paths connecting pairs of vertices of $C$. Our goal is to represent these paths as vertical or horizontal paths satisfying \Cref{pr:uerr-paths}.
    We prove that there is only one possible drawing $\Gamma_C$ of $C$ in any \UER drawing of $G$ whose external boundary coincides with $C$, up to rotation or a flip of the entire drawing, unless $G=C$.

    Let $v$ be a degree-3 vertex. By \Cref{pr:external-face}, $v$ cannot be a corner of $\Gamma_C$. Let $P \in {\cal P}$ be the path incident to $v$, let $u$ be the other endpoint of $P$, and let $C_L$ and $C_R$ be two paths along $C$ between $u$ and $v$; refer to \Cref{fig:find-external-face-a}. Observe that each of $C_L$ and $C_R$ must contain two of the corners of $\Gamma_C$, since $u$ and $v$ must lie on opposite sides of $\Gamma_C$, by \Cref{pr:uerr-paths}.
    We assume, w.l.o.g., that there exist no two vertices of $C_R$ that are connected by a path in $\mathcal{P}$, as otherwise we could select those two vertices as $u$ and $v$.
    Let $k$ be the number of degree-3 vertices in $C_R$. By the previous assumption, each of the $k$ vertices is the endpoint of a path in $\mathcal{P}$ that crosses $P$ in any \UER drawing of $G$.


    \begin{figure}[tb]
        \centering
        \begin{subfigure}[b]{0.35\textwidth}
            \includegraphics[width=\textwidth,page=1]{./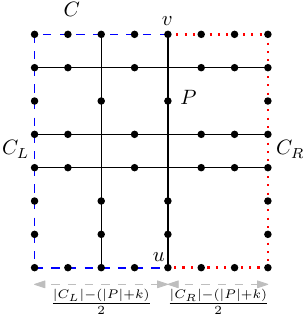}
            \subcaption{}
            \label{fig:find-external-face-a}
        \end{subfigure}
        \hfil
        \begin{subfigure}{0.35\textwidth}
            \includegraphics[width=\textwidth,page=2]{./figures/find-external-face.pdf}
            \subcaption{}
            \label{fig:find-external-face-b}
        \end{subfigure}
        \caption{Illustrations for: (a) \Cref{le:external-deg3-cycle-given} ($k=3$, $|P|=5$);  
        (b) \Cref{le:external-rotation-given} ($k=4$, $|P|=6$, $|P|_4=2$, $|P|_{\neq 4}=4$). $C_L$ is blue dashed, $C_R$ is red dotted, $P$ is bold.}
        \label{fig:find-external-face}
    \end{figure}

    Thus, by \Cref{pr:uerr-paths}, the number of vertices on one side of $\Gamma_C$, say the right side, including the corners, must be equal to $|P|+k$, where $|P|$ denotes the number of vertices of~$P$. Thus, the corners of $\Gamma_C$ in $C_R$ must be at distance $\frac{|C_R|-(|P|+k)}{2}$ from $u$ and $v$, respectively, $|C_R|$ being the number of vertices of $C_R$. Also, the corners of $\Gamma_C$ in $C_L$ must be at distance $\frac{|C_L|-(|P|+k)}{2}$ from $u$ and $v$. 
    If one of the vertices chosen as corners has degree~3, we reject the instance by \Cref{pr:external-face}. Else, we realize $\Gamma_C$ as a unit edge-length rectangle.
\end{proof}

We now consider the case in which the rotation system of $G$ is prescribed, while the external cycle and the corners are not (Item $2$ of \Cref{th:uerr-no-internal-3}). The strategy is to identify at most three cycles of $G$ as potential external cycles, and then apply a strategy similar to \Cref{le:external-deg3-cycle-given} to find the corners. In particular, for degree-4 vertices we can use the rotation system to understand the direction of the paths passing through them. We have the following.

\begin{restatable}{lemma}{leExternalRotationGiven}\label{le:external-rotation-given}
 Let $G$ be an $n$-vertex 4-graph and let the rotation system $\mathcal{R}(G)$ of $G$ be given. There exists an $O(n)$-time algorithm that constructs a constant number of rectangles such that if $G$ admits a \UER drawing with no internal degree-3 vertex and with rotation system~$\mathcal{R}(G)$, then the drawing of the external face coincides with one of these rectangles.  
\end{restatable}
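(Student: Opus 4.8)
The plan is to mimic the strategy of Lemma~\ref{le:external-deg3-cycle-given}, but to use the prescribed rotation system both to single out a constant number of candidate external cycles and to ``route'' the disjoint paths through degree-4 vertices (which are now allowed). First I would reduce to a setting analogous to that of Section~\ref{subsec:usf-external-face}: remove all degree-4 real-vertices and smooth every degree-2 vertex, obtaining a graph $G''$ whose vertices all have degree~3, except that now the rotation system gives, for each degree-4 vertex that was removed, the pairing of its four incident edges into the two straight ``through'' paths (two opposite edges go together, the other two go together). By Property~\ref{pr:uerr-paths}, in any \UER drawing with no internal degree-3 vertex every internal vertex has degree~4 and its incident edges split into a vertical through-pair and a horizontal through-pair; so a degree-4 vertex contributes, after removal, two independent ``virtual edges'' joining the two endpoints reached along the two straight paths. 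Replacing each degree-4 vertex by these two virtual edges (iterating, since a straight path may pass through several consecutive degree-4 vertices) yields a well-defined $G''$ together with an induced rotation system $\mathcal{R}(G'')$ at its degree-3 vertices. Crucially, the rotation system of the original drawing restricted to $G''$ must coincide with $\mathcal{R}(G'')$, and since a \UER drawing of $G$ induces a planar embedding of $G''$, the graph $G''$ must be planar with a planar rotation system equal to $\mathcal{R}(G'')$ up to a flip; if not, reject.

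Next I would use this unique embedding of $G''$ to identify the external cycle, exactly as in Section~\ref{subsec:usf-external-face}: in the planar embedding determined by $\mathcal{R}(G'')$, the external cycle $C$ of any valid \UER drawing must bound a face, and the four corner vertices are degree-2 vertices of $G$ (hence smoothed in $G''$) lying on that cycle; moreover, by Property~\ref{pr:external-face}$(iii)$ and Property~\ref{pr:uerr-paths}, the four corners split $C$ into four subpaths with opposite subpaths of equal length, and all virtual/real chords join opposite subpaths. The number of faces of $G''$ that are candidates to be the external face is not constant a priori, but the corner constraint together with the opposite-sides-equal-length constraint pins it down: picking the face whose boundary is $C$ determines, via Property~\ref{pr:uerr-paths}, how many through-paths cross a fixed incident path $P$, and then the computation in Lemma~\ref{le:external-deg3-cycle-given} (now counting, for a path $P$ incident to a chosen degree-3 vertex $v$, both the degree-3 endpoints on $C_R$ and the contribution of the degree-4 vertices that $P$ threads, call these quantities as in Figure~\ref{fig:find-external-face-b} $|P|_4$ and $|P|_{\neq 4}$) fixes the positions of the two corners on each of $C_L$ and $C_R$ at distance $\tfrac{|C_R|-(|P|_{\neq 4}+k)}{2}$ and symmetrically on $C_L$. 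If these distances are non-integral, or land on a degree-3 vertex, that candidate is discarded. I would argue, as in the unit-square case, that at most three choices of the external face survive (essentially one per cyclic ordering of the four corners, the candidate faces sharing edges pairwise), giving a constant number of rectangles; each is then realized as a unit edge-length rectangle.

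The main obstacle I anticipate is the bookkeeping around degree-4 vertices when contracting straight through-paths: one must check consistency of the rotation system along an entire maximal straight path (all intermediate degree-4 vertices must pair their edges ``straight,'' i.e., transversally to the path), correctly compute $|P|_4$ and $|P|_{\neq 4}$ for every incident path, and verify that the virtual edges produced do not create a non-simple or non-planar $G''$ spuriously; also, one must confirm that the assumption ``no two vertices of $C_R$ are joined by a path in $\mathcal{P}$'' from Lemma~\ref{le:external-deg3-cycle-given} still yields a valid reduction once $\mathcal{P}$ includes virtual paths through degree-4 vertices. Once the candidate cycle and corners are fixed, the remaining checks (length parity of opposite sides, corner degrees, edge-length-one realization) are linear-time and routine, and the internal vertices are then placed by the common algorithm of Section~\ref{subsec:uerr-no-internal-3-drawing}; since all reductions, the planarity/triconnectivity-style test on $G''$, and the corner-distance computations run in $O(n)$ time, the overall bound is $O(n)$.
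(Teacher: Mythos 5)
There is a genuine gap. The pivotal claim in your first paragraph---that a \UER drawing of $G$ ``induces a planar embedding of $G''$'', so that the induced rotation system $\mathcal{R}(G'')$ must be a planar rotation system and you may reject otherwise---is false. After contracting the straight through-paths, every vertical chord of $G''$ crosses, in the drawing, every horizontal chord it met there; hence the induced embedding of $G''$ has crossings, and its rotation system is in general not planar (the abstract graph $G''$ is indeed planar, e.g.\ by routing one chord family outside the cycle, but that re-routing changes the rotation at the chord endpoints, so planarity of the abstract graph does not rescue the test). Concretely: a rectangle with one internal vertical path and one internal horizontal path of degree-2 vertices crossing at a single crossing-vertex is a YES-instance; your $G''$ is $K_4$, and the rotation system induced by the drawing is the toroidal one (face tracing yields two faces, genus one), so your rejection step discards a valid instance. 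As a consequence, the ``unique embedding of $G''$'' on which your second paragraph relies does not exist; moreover the external cycle need not bound a face of any planar embedding of $G''$ (chords lie on both sides of it), and since the corners are not part of the input in this lemma you have no corner edges with which to replay the counting of \Cref{subsec:usf-external-face} that bounded the number of candidate cycles by a constant---your own ``not constant a priori \dots pins it down'' remark is exactly the point left unproven.

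The paper's proof avoids the reduction to $G''$ altogether and uses the rotation system directly. Since every degree-3 vertex must lie on the external cycle and cannot be a corner (\Cref{pr:external-face}), it fixes one degree-3 vertex $v$ and guesses which of its three incident faces is external, i.e., which pair of edges at $v$ belongs to $C$ (three guesses); the rotation system then traces the whole cycle deterministically: at a degree-3 vertex take the edge following the entering edge clockwise, at a degree-2 vertex take its other edge, reject if a degree-4 vertex is met or if some degree-3 vertex remains outside the traced cycle. This yields at most three candidate cycles in $O(n)$ time. The corners are then computed as in \Cref{le:external-deg3-cycle-given}, using the rotation only at degree-4 vertices to pair opposite edges into vertical/horizontal through-paths, rejecting if $k<|P|_4$ and otherwise placing two corners at distance $\frac{|C_R|-(|P|_{\neq 4}+k)}{2}$ from $u$ and $v$ (the remaining two via Condition (iii) of \Cref{pr:external-face}). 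Your corner-distance formula matches this, but in the paper it is applied only after the cycle has been obtained by the direct trace, not by selecting a face of a planar embedding of $G''$; to repair your argument you would need to replace the planarity/unique-embedding step by such a trace (or another mechanism that provably leaves only $O(1)$ candidate cycles).
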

\begin{proof}
First observe that, if $G$ has no degree-3 vertex, then all the external vertices in any \UER drawing of $G$ will have degree 2 in $G$, which is not possible unless $G$ is a cycle (or the graph is disconnected). Assume then that $G$ has at least one degree-3 vertex, and let $v$ be such a vertex.
By \Cref{pr:external-face} and by the requirements of the lemma, $v$ must be an external vertex distinct from the corners in any \UER drawing of $G$. In particular, one of the three faces incident to $v$ must coincide with the external face. We guess each of the three faces and test them separately; this corresponds to guessing the two edges of $C$ incident to $v$.

Let $(v,w)$ and $(v,z)$ be the two edges incident to $v$ assigned to $C$, such that $(v,w)$ follows $(v,z)$ clockwise in $\mathcal{R}(v)$. If $w$ has degree 4, we reject the current guess, by \Cref{pr:external-face}. If $w$ has degree 2, we assign also its other incident edge to $C$ and move on to analyzing this edge. If $w$ has degree 3, we can use the rotation system of $w$ to select the other edge incident to $w$ that is in $C$; namely, we add to $C$ the edge following $(w,v)$ clockwise in $\mathcal{R}(w)$, and move on to analyzing this edge. When we reach a vertex whose edges have already been assigned to $C$, we check whether it coincides with $v$; if not, we reject the current guess as the constructed cycle $C$ must be a simple cycle. Moreover, we check whether there are still degree-3 vertices in $G \setminus C$, in which case we reject the current guess. If we have not rejected the guess, we have obtained a simple cycle $C$ containing all the degree-3 vertices.
   
In order to select the four corners among the vertices of $C$, we use the same strategy as in \Cref{le:external-deg3-cycle-given}, namely we search for a set of paths $\mathcal{P}$ that connect pairs of vertices of $C$, and we use one of them to find the only selection of corners, if any, that complies with \Cref{pr:uerr-paths}. 

Note that, in this case, the paths in $\mathcal{P}$ are not necessarily disjoint, due to the presence of degree-4 vertices. However, in this case, we can assign to the same path each pairs of edges that are not consecutive in the rotation system around a degree-4 vertex, since one of these paths will be drawn vertical and the other horizontal. This implies that the set $\mathcal{P}$ can still be constructed in linear time starting from the degree-3 vertices on $C$. 

To compute the corners, let $P \in {\cal P}$ be the path between two vertices $u$ and $v$ such that one of the paths $C_R\subset C$ connecting $u$ and $v$ does not contain two vertices connected by another path in $\mathcal{P}$; refer to \Cref{fig:find-external-face-b}. Let $k$ be the number of degree-3 vertices in $C_R$, let $|P|_4$ be the number of degree-4 vertices of $P$, and let $|P|_{\neq 4} = |P| - |P|_4$. By \Cref{pr:uerr-paths}, each degree-2 vertex of $P$ must be aligned with a degree-2 vertex of the right/left side of $\Gamma_C$, and each degree-4 vertex of $P$ must be aligned with a degree-3 vertex of the right/left side of $\Gamma_C$.
Hence, if $k < |P|_4$ we reject the current guess; otherwise, the right side of $\Gamma_C$ must contain $|P|_{\neq 4} + k$ vertices, so two of the corners must be at distance $\frac{|C_R|-(|P|_{\neq 4}+k)}{2}$ from $u$ and $v$, respectively. The other two corners are computed by relying on Condition (iii) of \Cref{pr:external-face}.
\end{proof}

Finally, consider the general case in which no additional information is prescribed in the input (Item $5$ of \Cref{th:uerr-no-internal-3}). In this case, we guess the $O(n^4)$ combinations of the four corners, and apply \Cref{le:external-corners-given} to obtain $O(n^4)$ candidate rectangles in $O(n^5)$ time.

\subsubsection{Drawing the internal vertices.}\label{subsec:uerr-no-internal-3-drawing}

We now describe how to test whether one of the rectangles $\Gamma_C$ constructed in \Cref{subsec:uerr-no-internal-3-external-face} to represent the external cycle $C$ can be completed to a \UER drawing of $G$, by placing the vertices of $G \setminus C$ in its interior.
Recall that the vertices of $G \setminus C$ have degree either~2 or~4. Since the two edges incident to a degree-2 vertex must be drawn both vertical or both horizontal in a \UER drawing, we smooth such vertices and get a graph with all the internal vertices of degree 4. Thus, we can apply the same algorithm in \Cref{subsec:usf-internal} to compute the top/bottom/left/right neighbor of each degree-4 vertex. Recall that this assignment of neighbors is unique, if it exists. We then restore the degree-2 vertices and check if, consistently with  \Cref{pr:uerr-paths}, the resulting  $x$- and $y$-coordinates of the degree-2 vertices coincide with those of the degree-2 vertices on the sides of $\Gamma_C$. 

\bigskip

\noindent{\bf Proof of \Cref{th:uerr-no-internal-3}}. The correctness follows from \Cref{pr:uerr-paths} and \Cref{le:external-corners-given,le:external-deg3-cycle-given,le:external-rotation-given}, for the corresponding three cases, and from the exhaustive guesses in the other two. The time complexity comes from applying the linear-time algorithm described in \Cref{subsec:uerr-no-internal-3-drawing} to each of the candidate rectangles constructed in the different cases, and since their number is bounded by a constant.

\subsection{Internal Paths and Cycles}\label{se:uerr-inner-2-graph}

We now consider the restricted scenario in which the input 4-graph $G$ contains a cycle $C$ such that $G \setminus C$ has vertex-degree at most two, i.e., it is a collection of paths and cycles (see, e.g., \Cref{fig:uerr-inner-2-graph}). We call $G$ an \emph{inner-2-graph with respect to $C$}. We only study the case when $C$ is given as part of the input. 


\begin{restatable}{theorem}{thInnerTwoGraph}\label{th:inner-2-graph}
    Let $G$ be an inner-2-graph with respect to a given cycle $C$. There exists a polynomial-time algorithm that tests whether $G$ admits a \UER drawing whose external cycle coincides with $C$. If such a drawing exists, the algorithm constructs one. If the four corners are prescribed, the algorithm takes $O(n^2)$ time, otherwise it takes $O(n^4)$ time. Furthermore, the algorithm can be adapted to preserve a given rotation system.
\end{restatable}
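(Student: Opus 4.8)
The plan is to reuse the two-phase scheme of \Cref{th:uerii,th:uerr-no-internal-3}: first fix a unit edge-length rectangle $\Gamma_C$ that realizes the prescribed external cycle $C$, and then test whether the vertices of $G\setminus C$ fit inside it. By \Cref{pr:external-face}, the corners of $\Gamma_C$ are four degree-2 vertices of $C$ whose two arcs between consecutive corners match the lengths of the opposite arcs (Condition $(iii)$). If the corners are prescribed, this determines $\Gamma_C$ up to an irrelevant rotation/reflection, so we only verify Condition $(iii)$; otherwise we use the fact that fixing two adjacent corners forces the other two through Condition $(iii)$, which leaves only $O(n^2)$ candidate rectangles to enumerate. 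If a rotation system $\mathcal R(G)$ is given, we additionally discard every $\Gamma_C$ whose rotation around the external vertices disagrees with $\mathcal R(G)$, and we later check consistency with the interior as well.

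The heart of the proof is to decide, for a fixed $\Gamma_C$, whether $G\setminus C$ can be placed in its interior, in $O(n^2)$ time; multiplying by the $O(1)$, resp.\ $O(n^2)$, candidate rectangles yields the claimed $O(n^2)$, resp.\ $O(n^4)$, bounds. The structural handle is that $G\setminus C$ is a disjoint union of paths and cycles. Smoothing the degree-2 vertices of $G$ that lie in $G\setminus C$ (their two incident edges must be drawn collinearly, since rectangular faces have only $90^\circ$ and $180^\circ$ angles), as in \Cref{subsec:uerr-no-internal-3-drawing}, leaves internal vertices that are either degree-4 --- which, together with the crossing-vertices of the drawing, are handled exactly as in \Cref{subsec:usf-internal}: opposite incident edges are collinear and, once one incident direction is known, all four are forced --- or degree-3 with at least one chord to $C$. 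We then propagate coordinates along each smoothed component: every edge is a unit horizontal or vertical segment; the component goes straight through a degree-2 vertex and, possibly, straight through a degree-4 vertex, and it can turn only at a degree-4 vertex or at a degree-3 vertex, where one of the two component edges and a chord form the $180^\circ$ pair while the remaining incident edge is perpendicular. The key rigidity is that a chord incident to a degree-3 vertex must end on $C$, hence on a specific boundary grid point of $\Gamma_C$ at unit distance from that vertex; this pins down both where and in which direction the component turns. Hence, after fixing $\Gamma_C$, the only freedom is the direction and orientation of the first edge of each component; processing the components in an order in which each new one is anchored by an already-placed vertex, together with a guess among $O(n)$ possibilities for one distinguished piece of local data (for instance, which vertex of $C$ a chosen component is glued to), makes the propagation deterministic. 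Each run is linear, and finally we check within the same time bound that no two vertices share a grid point, that no edge passes through a vertex, that the restored degree-2 vertices land at the forced coordinates (as in \Cref{subsec:uerr-no-internal-3-drawing}), and that every bounded region is a rectangle; when $\mathcal R(G)$ is given, we also check it against the forced choices, which only prunes branches.

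I expect the interior-placement step to be the main obstacle: one must prove a rigidity statement in the spirit of \Cref{pr:uerr-paths}, namely that fixing $\Gamma_C$ forces the positions of all degree-3 and degree-4 internal vertices up to the $O(n)$ initial data above, and one must rule out the subtle bad interactions where two components overlap, create a non-rectangular bounded region, or cross in an uncontrolled way. The case analysis to organize carefully is: (a) path components whose endpoints carry one or two chords to $C$; (b) cycle components, which must themselves be drawn as rectilinear polygons anchored to $C$ through their chords; and (c) crossing-vertices of the drawing, where two perpendicular strands meet and the local picture is as in \Cref{subsec:usf-internal}. The adaptation to a prescribed rotation system is immediate once this is in place.
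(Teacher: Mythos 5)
Your first phase (selecting and drawing $\Gamma_C$: one rectangle if the corners are prescribed, $O(n^2)$ candidates obtained by guessing two adjacent corners and inferring the other two from Condition $(iii)$ of \Cref{pr:external-face}) coincides with the paper. The gap is in the interior-placement phase, which is the heart of the theorem and which you yourself flag as ``the main obstacle'': you postulate a rigidity statement (after fixing $\Gamma_C$, all internal positions are forced up to $O(n)$ pieces of initial data) and a component-by-component propagation, but neither is proved, and the anchor you propose is incorrect. You claim that a chord incident to an internal degree-3 vertex ends on $C$ ``at a specific boundary grid point of $\Gamma_C$ at unit distance from that vertex,'' pinning down where the component turns; this is false, because the unit-length condition applies to the edges of the drawing $\Gamma$ (segments between consecutive real- or crossing-vertices), so an edge of $G$ from an internal vertex to $C$ may be crossed arbitrarily many times and be arbitrarily long. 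Its endpoint $z\in C$ only constrains the \emph{direction} of alignment (a shared $x$- or $y$-coordinate), not the turning point. Similarly, crossing-vertices are not vertices of $G$: where they occur is an output of the placement, so saying they are ``handled exactly as in \Cref{subsec:usf-internal}'' (where every internal vertex is a real degree-4 vertex and \Cref{pr:ufr-paths} applies) does not transfer. Finally, the deferred check that ``every bounded region is a rectangle'' is not a routine post-processing step; guaranteeing rectangularity is precisely what must drive the placement, and your $O(n)$-guess-plus-linear-propagation scheme neither accounts for it nor for the empty grid points and crossings, whose number alone can be $\Theta(n^2)$.

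For comparison, the paper needs no rigidity lemma and no guessing inside $\Gamma_C$: it sweeps the interior grid points of $\Gamma_C$ left-to-right and top-to-bottom, maintaining two arrays \upA and \leftA that record, for each column and row, which vertex (if any) is already committed to be aligned there. Five cases on the pair $(\leftA[j],\upA[i])$ decide deterministically whether the point $(i,j)$ receives a real vertex, a crossing, or stays empty; when a real vertex is placed, the arrays are updated by following its chain of degree-2 non-fixed neighbors to the next vertex $w$ of degree at least 3, and---this is where the inner-2-graph hypothesis enters---either $w\in C$ or $w$ has a neighbor $z\in C$, whose position on $\Gamma_C$ dictates horizontal versus vertical alignment or rejection. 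Correctness is established by invariants (\textbf{I1}--\textbf{I3}) showing every decision is forced, so faces are rectangular by construction and no a posteriori overlap or face check is needed; the $O(n^2)$ bound per rectangle comes from the grid size (possibly quadratic in $n$), with each degree-2 chain traversed a bounded number of times. Without this sweep, or an equivalent forcing mechanism together with a proof, your outline leaves the central step of the theorem unestablished.
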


%

\noindent \textbf{Algorithm.}
We first compute all rectangles that are candidate to represent the external boundary of the drawing.  
%
Let $\Gamma_C$ be one of the candidate rectangles computed in the previous step. We show how to place the vertices of $G \setminus C$ in the interior of $\Gamma_C$.
Assume, w.l.o.g., that the bottom-left corner of $\Gamma_C$ has coordinates $(0,0)$. Let $W$ and $H$ be the maximum $x$ and $y$ coordinates of $\Gamma_C$, respectively. We traverse the points $(i,j)$ ($1 \leq i \leq W-1$ and $1 \leq j \leq H-1$) of the grid that lie internally to $\Gamma_C$ from left to right and secondarily from top to bottom starting from the top-leftmost one, which has coordinates $(1,H-1)$.
Since we process the grid points from left to right and from top to bottom, the $x$-coordinate $i$ increases and the $y$-coordinate $j$ decreases. In the next description, we call \emph{placed vertices} those vertices whose coordinates have already been assigned by the algorithm (i.e., the vertices of $C$ and those already placed at some internal grid points). When processing the point $(i,j)$ we call \emph{fixed vertices} the placed vertices with coordinates $(i',j')$ such that either $i'<i$ or $i'=i$ and $j'>j$ (see \Cref{fig:inner-2-graphs.a}).

    \begin{figure}[ptb]
        \centering
        \begin{subfigure}[b]{0.48\textwidth}
            \includegraphics[width=\textwidth,page=1]{./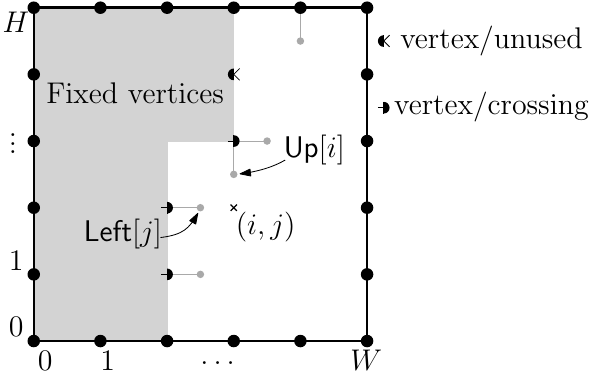}
            \subcaption{}
            \label{fig:inner-2-graphs.a}
        \end{subfigure}
        \hfil
        \begin{subfigure}{0.48\textwidth}
            \includegraphics[width=\textwidth,page=2]{./figures/inner-2-graphs.pdf}
            \subcaption{}
            \label{fig:inner-2-graphs.b}
        \end{subfigure}
        \hfil
        \begin{subfigure}{0.48\textwidth}
            \includegraphics[width=\textwidth,page=3]{./figures/inner-2-graphs.pdf}
            \subcaption{}
            \label{fig:inner-2-graphs.c}
        \end{subfigure}
        \hfil
        \begin{subfigure}{0.48\textwidth}
            \includegraphics[width=\textwidth,page=4]{./figures/inner-2-graphs.pdf}
            \subcaption{}
            \label{fig:inner-2-graphs.d}
        \end{subfigure}
        \hfil
        \begin{subfigure}{0.48\textwidth}
            \includegraphics[width=\textwidth,page=5]{./figures/inner-2-graphs.pdf}
            \subcaption{}
            \label{fig:inner-2-graphs.e}
        \end{subfigure}
        \hfil
        \begin{subfigure}{0.48\textwidth}
            \includegraphics[width=\textwidth,page=6]{./figures/inner-2-graphs.pdf}
            \subcaption{}
            \label{fig:inner-2-graphs.f}
        \end{subfigure}
        \caption{Illustration for \Cref{th:inner-2-graph}. (a) Notation of the algorithm. (b)--(f) Processing the point $(i,j)$:   (b) Case 1; (c) Case 2; (d) Case 3; (e) Case 4; (f) Case 5.}
        \label{fig:hortalemma}
    \end{figure}
    
    During the traversal we maintain two arrays of vertices \upA and \leftA of size $W$ and $H$, respectively; see \Cref{fig:inner-2-graphs.a}. Intuitively, they are used to extend the concept of top/left/bottom/right neighbors to all grid points, even if they do not contain a vertex. More specifically, when we process the point $(i,j)$, let $q$ be the rightmost fixed vertex such that $y(q)=j$; analogously, let $r$ be the bottommost fixed vertex such that $x(r)=i$.  If $\leftA[j]$ stores a vertex $q'$, then $q'$ is adjacent to $q$ and $y(q')$ will be equal to $j$ whereas $x(q')$ has not been decided yet; in other words, it is already decided that $q'$ will be horizontally aligned with $q$ (and thus it is candidate to occupy point $(i,j)$). If $\leftA[j]$ is \nullA, then no neighbor of $q$ will be assigned $y$-coordinate $j$; this implies that $(i,j)$ cannot be occupied by a crossing.  Analogously, if $\upA[i]$ stores a vertex $r'$, then it is already decided that $r'$ will be vertically aligned with $r$; if $\upA[i]$ is \nullA, then no neighbor of $r$ will be assigned $x$-coordinate $i$.  

    We initialize the array \leftA as follows. For $1\leq j \leq H-1$, let $v_j$ be the vertex of the left side of $\Gamma_C$ such that $y(v_j)=j$. If $\deg_G(v_j)=2$, we set $\leftA[j]=\nullA$. If $\deg_G(v_j)=3$, we set $\leftA[j]=u_j$, where $u_j$ is the neighbor of $v_j$ that is not on the left-side of $\Gamma_C$. The vertex $u_j$ is either on the right side of $\Gamma_C$ or it is an internal vertex of $G$. If $u_j\in C$ and $y(u_j) \neq y(v_j)$, we reject $\Gamma_C$.
    The $\upA$ array is initialized similarly, considering the vertices on the top side of $\Gamma_C$, and their neighbors that do not lie on the top side of~$\Gamma_C$ (if any).
    
    We now explain how to process the generic point $(i,j)$.  
    We distinguish 5 cases depending on the values in the \leftA and \upA arrays (see also \Cref{fig:inner-2-graphs.b,fig:inner-2-graphs.c,fig:inner-2-graphs.d,fig:inner-2-graphs.e,fig:inner-2-graphs.f}):
    \begin{enumerate*}
        \item $\leftA[j]=\upA[i]$ and they are both not \nullA;
        \item $\leftA[j]\neq \upA[i]$ and they are both not \nullA;
        \item $\leftA[j]\neq\nullA$ and $\upA[i]=\nullA$;
        \item $\leftA[j]=\nullA$ and $\upA[i]\neq\nullA$; and
        \item $\leftA[j]=\upA[i]$ and they are both \nullA.
    \end{enumerate*}
    In Case 1, let $v$ be the vertex stored in $\leftA[j]=\upA[i]$. We map $v$ to the point $(i,j)$ and update the two arrays as explained below.
    In Case 2, we put a crossing in the point $(i,j)$ and leave \leftA and \upA unchanged.
    In Case 3, let $v=\leftA[j]$; we map $v$ to $(i,j)$ and update the two arrays as described below.
    Similarly, in Case 4, let $v=\upA[i]$; we map $v$ to $(i,j)$ and update the two arrays as described below.
    Finally, in Case 5 the point $(i,j)$ will be left unused (it will be internal to a face), and we leave \leftA and \upA unchanged.
    
    The updates of the values $\leftA[j]$ and $\upA[i]$ in Cases 1, 3, and 4, are done as follows. Let $U$ be the set of neighbors of $v$ that are not fixed. 
    If $|U|=0$, we reject the instance, as either $\deg_G(v)=2$ and its incident edges form a $270^\circ$ angle (Case 1), or $\deg_G(v)=1$ (Cases 3 and 4). Also, if $|U| \geq 3$, we reject the instance as in this case there is no possibility of placing all vertices in $U$. 
    Thus it must be $|U| \in \{1,2\}$; let $u$ and $u'$ be the two vertices of $U$, possibly with $u'=\nullA$ if $|U|=1$. In this case, for each vertex $u$ and $u'$, either its coordinates are unassigned or it belongs to $C$.  
    Suppose first that at least one of them, say $u$, belongs to $C$; if $x(v) \neq x(u)$ and $y(v) \neq y(u)$, we reject the instance because $v$ and $u$ cannot be horizontally or vertically aligned. If $x(v) = x(u)$ we set $\upA[i]=u$ and $\leftA[j]=u'$; if $y(v) = y(u)$ we set $\upA[i]=u'$ and $\leftA[j]=u$. If $u'$ exists and it is already placed, then we further check that its coordinates are consistent with its assignment to $\upA[i]$ or $\leftA[j]$.   

    Suppose now that neither $u$ nor $u'$ is in $C$. 
    If $\deg_G(u)\geq 3$ we set $w{=}u$; else we follow the edge incident to $u$ distinct from $(u,v)$ and continue traversing all degree-2 vertices until we reach a degree-3 vertex $w$. If $w$ is fixed, we reject the instance, as either the traversed path can only be drawn with an angle of $270^\circ$, or $u$ has to be on the left of $v$, contradicting the fact that it is not fixed. Hence $w$ is non-fixed and either it belongs to $C$ or its coordinates are unassigned. Assume first that $w \in C$. If $x(w){\neq} x(v)$ and $y(w){\neq} y(v)$ we reject the instance, as the path visited by the traversal must be either horizontal or vertical. Else, if $x(w){=}x(v)$ we set $\upA[i]{=}u$, and $\leftA[j]{=}u'$. Similarly, if $y(w){=}y(v)$ we set $\leftA[j]{=}u$ and $\upA[i]{=}u'$. 
    If the coordinates of $w$ have not been assigned, then $w \in G\setminus C$. Since $G \setminus C$ is a 2-graph and $\deg_G(w)>2$, at least one neighbor of $w$, call it $z$, belongs to $C$. 
    If $z$ lies on the left or right side of $\Gamma_C$, and $0<y(z)<y(v)$ or $x(z)=x(v)$ and $y(z)=0$, we set
    $\upA[i]=u$ and $\leftA[j]=u'$ (see \Cref{fig:hortalemma-down}).
    If $z$ lies on the top or bottom side of $\Gamma_C$, and $x(v)<x(z)<W$ or $x(z)=W$ and $y(z)=y(v)$, we set $\leftA[j]=u$ and $\upA[i]=u'$ (see \Cref{fig:hortalemma-right}).
    If $z$ does not lie in any of the described positions, we reject the instance, as $z$ cannot share a coordinate with $w$, and thus the path from $v$ to $w$ cannot be drawn horizontally or vertically.
    
    Once all the grid points have been considered, if there is a vertex of $G$ with unassigned coordinates, we reject the instance, otherwise we have a drawing with external boundary~$\Gamma_C$.

    \begin{figure}
        \centering
        \begin{subfigure}[b]{0.25\textwidth}
            \includegraphics[width=\textwidth,page=3]{./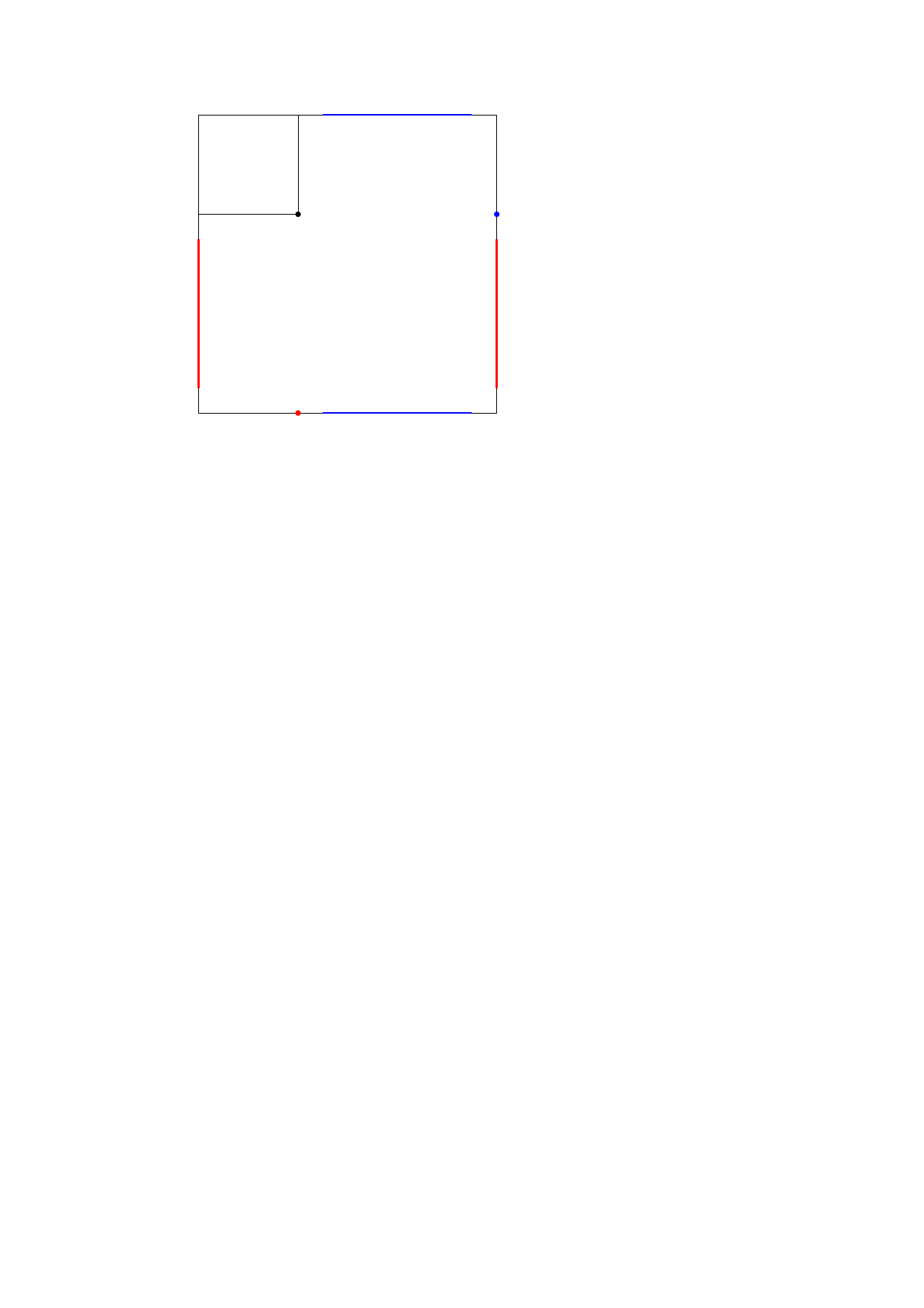}
            \subcaption{}
            \label{fig:hortalemma-down}
        \end{subfigure}
        \hfil
        \begin{subfigure}{0.25\textwidth}
            \includegraphics[width=\textwidth,page=5]{./figures/hortalemma.pdf}
            \subcaption{}
            \label{fig:hortalemma-right}
        \end{subfigure}
        \caption{(a) If $z$ coincides with one of the red points of $\Gamma_C$, then $x(u)=x(v)$. (b) If $z$ coincides with one of the blue points of $\Gamma_C$, then $y(u)=y(v)$.}
        \label{fig:hortalemma}
    \end{figure}

\noindent \textbf{Correctness.} 
    We prove the correctness of the algorithm by proving that the  following invariants hold when we process the grid point $(i,j)$:
    \begin{itemize}
        \item \textbf{I1}: The drawing induced by the placed vertices is a \UER drawing with the only exception of one face which may not be rectangular; such a face (if it exists) is the one containing all the grid points still unprocessed.
        \item  \textbf{I2}: If $\leftA[j] \neq \nullA$, then the vertex $v=\leftA[j]$ has to be drawn necessarily with $y$-coordinate equal to $j$; if $\leftA[j] = \nullA$, none of the neighbors of the rightmost fixed vertex $q$ with $y(q)=j$ can be drawn with $y$-coordinate $j$.
        \item \textbf{I3}: If $\upA[i] \neq \nullA$, then the vertex $v=\upA[i]$ has to be drawn necessarily with $x$-coordinate equal to $i$; if $\upA[i] = \nullA$, none of the neighbors of the bottommost fixed vertex $r$ with $x(r)=i$ can be drawn with $x$-coordinate $i$.
    \end{itemize}

    The invariants hold before we process the first grid point internal to $\Gamma_C$. Namely, the drawing induced by the placed vertices is $\Gamma_C$ and therefore it is a \UER drawing. Invariant \textbf{I2} holds because of the way we initialize the array \leftA. Namely, if $\deg_G(v_j)=2$, we set $\leftA[j]=\nullA$ and Invariant \textbf{I2} holds for $\leftA[j]$ because all the neighbors of $v_j$ belong to $\Gamma_C$; if $\deg_G(v_j)=3$, we set $\leftA[j]=u_j$ and Invariant \textbf{I2} holds for $\leftA[j]$ because the only neighbor $u_j$ of $v_j$ that does not belong to $\Gamma_C$ must be horizontally aligned with $v_j$. A symmetric argument applies to Invariant \textbf{I3}. 

    Assume now that the invariants hold before processing the grid point $(i,j)$.
    Because of invariants \textbf{I2} and \textbf{I3} the decision taken by the algorithm about point $(i,j)$ (Cases 1--5) is correct. Notice that each of the choices made in Cases 1--5 maintains Invariant \textbf{I1}.  We now prove that the updates of the value $\leftA[j]$ and $\upA[i]$ maintain the Invariants \textbf{I2} and \textbf{I3}. In Cases $2$ and $5$ $\leftA[j]$ and $\upA[i]$ are not changed and the Invariants \textbf{I2} and \textbf{I3} still hold. Consider Cases $1$, $3$, and $4$. In all the three cases a vertex $v$ is placed at point $(i,j)$ and $\leftA[j]$ and $\upA[i]$ are updated looking at the non-fixed neighbors $U$ of $v$. As already observed, if $|U|\in \{0,3,4\}$, the algorithm correctly rejects the instance. Otherwise, $v$ has at most two non-fixed neighbors $u$ and $u'$ (with $u'$ possibly equal to $\nullA$). If one of the two, say $u$, belongs to $C$, then $u$ and $v$ must be aligned either horizontally, if they share the $x$-coordinate, or vertically, if they share the $y$-coordinate; in the first case we correctly set $\leftA[j]=u$, while in the second case we correctly set $\upA[i]=u$ (if $u$ and $v$ do not share any coordinate we correctly reject the instance). If $u'$ is not \nullA, we correctly set $\upA[i]=u'$ if $\leftA[j]=u$, and viceversa. Notice that, if $u'$ is already placed we have to check that its coordinates allow the desired alignment; if not we reject the instance. 
If neither $u$ nor $u'$ belong to $C$, we search for a non-fixed vertex $w$ whose degree is at least $3$ such that $v$ and $w$ are connected by a chain of degree-2 non-fixed vertices, the first one being $u$ ($u$ and $w$ may coincide, in which case the chain is empty).  If $w$ belongs to $C$, then $v$, $w$ and all the vertices of the  chain connecting $v$ and $w$ must be aligned either horizontally, if $v$ and $w$ share the $y$-coordinate, or vertically, if they share the $x$-coordinate; in the first case we correctly set $\leftA[j]=u$, while in the second case we correctly set $\upA[i]=u$ (if $w$ and $v$ do not share any coordinate we correctly reject the instance). If $w$ does not belong to $C$, then it will be adjacent to a vertex $z$ that belongs to $C$.  In each of the three cases described above (see also \Cref{fig:hortalemma}) the position of $z$ in $\Gamma_C$ implies the vertical or horizontal alignment of $v$ and $u$, or the rejection of the instance. In each case we correctly set the values of $\leftA[j]$ and $\upA[i]$.
    
\noindent \textbf{Time Complexity.}  We conclude the proof by discussing the time complexity of our algorithm. 
We first compute all rectangles that are candidate to represent the external boundary of the drawing. The time complexity of this step will be discussed later.  

As each point of the grid that lies internally to $\Gamma_C$ is processed at each step, we have $O(n^2)$ steps. In fact, the grid can have size quadratic with respect to the number of vertices of $G$. Each step in which a vertex is placed consists of constant-time operations on the arrays and on the vertex, plus a traversal of the graph that visits a path of degree-2 vertices that connects two vertices with higher degree. We claim that each vertex $u$ is visited by at most four traversals. To prove the claim, it is sufficient to notice that each path of degree-2 vertices is traversed at most once. Thus, if $\deg_G(u)=2$, $u$ belongs to the path visited by a unique traversal, and if $\deg_G(u)\geq3$, then there are at most two traversal having $u$ as their starting point and two traversal having $u$ as ending point. Thus, given the candidate rectangle $\Gamma_C$, the algorithm runs in $O(n^2)$ time. If $\Gamma_C$ is not given we first compute all rectangles that are candidate to represent the external boundary of the drawing.  More precisely, if four vertices of $C$ are prescribed to be the corners of the external face, we have at most one rectangle respecting \Cref{pr:external-face}. Otherwise, as for Item 4 of \Cref{th:uerr-no-internal-3}, we guess $O(n^2)$ pairs of degree-2 vertices to be consecutive corners and infer the other two based on Condition (iii) of \Cref{pr:external-face}, thus obtaining $O(n^2)$ candidate rectangles.
Thus if the fours corners are given the algorithm runs in $O(n^2)$ time. Otherwise the algorithm runs in $O(n^4)$ time.

\section{Rectangular Faces: General Case}\label{se:uerr}

For the general case we describe an FPT algorithm in the number of degree-3 vertices (\Cref{th:uerr-fpt}). To this aim, we first provide polynomial-time algorithms when the angles around each degree-3 vertex are given in input (\Cref{le:uerr-fixed-angles}). More formally, 
a \emph{large-angle assignment} $\mathcal{A}(G)$ of a 4-graph $G$ determines for each degree-3 vertex $v$ of $G$ a pair $\mathcal{A}(v)=\langle u,w\rangle$, such that $u,w \in N(v)$. We say that a \UER drawing $\Gamma$ of $G$ \emph{realizes} a large-angle assignment $\mathcal{A}(G)$ if, for each degree-3 vertex $v$ of $G$, with $\mathcal{A}(v)=\langle u,w\rangle$, we have in $\Gamma$ that either $x(u)=x(w)$ or $y(u)=y(w)$, i.e. the angle delimited by the edges $(u,v)$ and $(v,w)$ is a $180^\circ$-degree angle. Note that $\mathcal{A}(G)$ does not imply a rotation system~for~$G$.

\begin{lemma}\label{le:uerr-fixed-angles}
    Let $G$ be a 4-graph with a large-angle assignment $\mathcal{A}(G)$.
    There exists a polynomial-time algorithm that tests whether $G$ admits an angle-preserving \UER drawing. If such a drawing exists, the algorithm constructs one. Moreover, the algorithm can be adapted to preserve a given rotation system and/or to have a prescribed external cycle.
    The time complexity of the algorithm is:
    \begin{enumerate*}
        \item $O(n^2)$ if the four corners are given;
        \item $O(n^4)$ if the external cycle is prescribed;
        \item $O(n^{4.5})$ in the general case.
    \end{enumerate*}
\end{lemma}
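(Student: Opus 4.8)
\medskip\noindent\textbf{Proof proposal.}
I would follow the same two-phase scheme as \Cref{th:uerii} and \Cref{th:inner-2-graph}: first fix the axis-aligned rectangle $\Gamma_C$ representing the external cycle, then place the remaining vertices in its interior. The point of having $\mathcal{A}(G)$ in input is that it removes every combinatorial choice at degree-3 vertices: for each vertex $v$ we now know which incident edges must be drawn mutually collinear, i.e., span a $180^\circ$ angle at $v$ --- the two edges of a degree-2 vertex, each of the two opposite pairs at a degree-4 vertex, and the pair $\mathcal{A}(v)$ at a degree-3 vertex, whose third incident edge is then forced perpendicular to them. Chaining these forced collinearities partitions $E(G)$ into maximal straight paths and cycles, which I call \emph{lines}: in any \UER drawing realizing $\mathcal{A}(G)$, every line is drawn as a sequence of unit segments on a single grid line, horizontal or vertical, and two lines sharing a vertex meet perpendicularly there. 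A line that is a cycle cannot lie in the interior (an incident face would fail to be rectangular, or the line would overlap itself), so the external cycle $C$ must be one of the $O(n)$ cycle-lines of this partition, and, by \Cref{pr:external-face}, its four corners are degree-2 vertices of $C$ splitting it into four arcs of lengths $a,b,a,b$. When the corners are given, a single linear-time traversal following the forced collinearities recovers $C$ and hence $\Gamma_C$; when only $C$ is given, I would guess two of its degree-2 vertices as consecutive corners and infer the other two from Condition (iii) of \Cref{pr:external-face}; in the general case I would guess one corner, which already fixes the cycle-line carrying it, together with a second corner on that line, obtaining $O(n^2)$ candidate rectangles overall.

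For a fixed candidate $\Gamma_C$, the interior is filled by the grid-point sweep of \Cref{th:inner-2-graph}: scanning the internal grid points from left to right and top to bottom while maintaining the arrays \upA and \leftA and the invariants \textbf{I1}--\textbf{I3}. The new ingredient relative to \Cref{th:inner-2-graph} is that an internal line may run through degree-3 and degree-4 vertices: a degree-4 vertex is simply the crossing point of one horizontal and one vertical line (handled like a crossing-vertex that happens to carry a real vertex), while at a degree-3 vertex the straight pair continues the current line and the third incident edge opens a perpendicular line whose direction is thereby forced. Propagating these forced directions through the adjacency graph of the lines either 2-colours the lines into horizontal and vertical ones or exposes an odd cycle, in which case the instance is rejected. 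With the directions fixed, the sweep either determines the position of every line uniquely --- and then, since each line and each grid point is touched a constant number of times, a candidate $\Gamma_C$ is processed in $O(n^2)$ time, exactly as before --- or some lines remain free to slide over a common band of rows (columns), in which case I would model their placement as a bipartite matching between lines and admissible tracks and solve it with Hopcroft--Karp; this is what introduces the extra $\sqrt{n}$ factor in the general case. Combining the cost of the interior phase with the number of candidate rectangles (a constant when the corners are given, $O(n^2)$ when the external cycle or nothing is prescribed) yields the $O(n^2)$, $O(n^4)$, and $O(n^{4.5})$ bounds of the statement; a prescribed rotation system only prunes choices and cannot slow the algorithm down.

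The step I expect to be the main obstacle is proving that this line structure captures feasibility exactly: that a globally consistent choice of directions and tracks for the lines, subject to the local rules enforced by the sweep, is not merely necessary but also sufficient for a \UER drawing realizing $\mathcal{A}(G)$ with external rectangle $\Gamma_C$ to exist. Concretely, one must re-establish the invariants \textbf{I1}--\textbf{I3} in the presence of internal degree-3 and degree-4 vertices and of the dummy crossings they induce, so that no hidden non-rectangular face is ever created; a degree-4 vertex, unlike the degree-2 chains of \Cref{th:inner-2-graph}, ties two lines together at once, and keeping the \upA/\leftA bookkeeping consistent when the sweep passes through it is delicate. One also has to argue that the bipartite-matching relaxation used when the sweep leaves room for choice is both sound and complete --- in particular that the crossings it forces between lines are always admissible, which is exactly where the permissiveness of the model (arbitrary crossings allowed) is used.
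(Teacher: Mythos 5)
Your external-face phase essentially reproduces the paper's: the large-angle assignment forces a unique walk through degree-2 and degree-3 vertices (your ``lines''), so the candidate external cycle is unique once a vertex on it is guessed, and the corners are then guessed as in Items~2--3. Two small cautions there: your blanket claim that \emph{every} degree-2 vertex has collinear incident edges, and hence that every line is drawn on a single grid line, is false at the four corners of the external rectangle (they are degree-2 vertices with a $90^\circ$ angle), which is precisely why the external ``cycle-line'' is not straight and why the corners must be guessed; and in the paper the $\sqrt{n}$ factor of the general case comes from a counting argument on candidate external cycles (any external cycle has more than $\sqrt{n}$ vertices, so starting the forced walk from each degree-3 vertex yields only $O(\sqrt{n})$ distinct cycles), not from anything in the interior phase.

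The genuine gap is in the interior phase, which is where the lemma's proof actually does its work. You assert that the left-to-right/top-to-bottom sweep either fixes everything uniquely or leaves some lines ``free to slide'', and you then propose resolving the residual freedom by a bipartite matching between lines and tracks solved with Hopcroft--Karp -- while yourself conceding that you cannot show this relaxation is sound or complete, nor that invariants \textbf{I1}--\textbf{I3} survive internal degree-3 and degree-4 vertices. In the paper there is no residual freedom and no matching: when the vertex placed at $(i,j)$ has degree 3, exactly one element of $\mathcal{A}(v)$ is fixed, which immediately determines which non-fixed neighbor goes into \upA and which into \leftA; when it has degree 4, a traversal is launched as in \Cref{th:inner-2-graph}, and along this traversal a degree-4 vertex (or a degree-3 vertex $w$ with $w'\notin\mathcal{A}(w)$) is resolved by counting its fixed neighbors, while a degree-3 vertex $w$ with $w'\in\mathcal{A}(w)$ and no placed neighbor is treated like a degree-2 vertex: the decision is \emph{postponed} by continuing the traversal along the edge forming the large angle with $(w,w')$, until a vertex forces the orientation or the boundary is reached. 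Establishing that this deterministic propagation always either makes the unique correct choice or correctly rejects (so each candidate rectangle is processed in $O(n^2)$ time, each vertex being visited a constant number of times) is the heart of the proof; a matching between lines and tracks would in any case not obviously capture the constraints, since ``sliding'' lines are coupled through shared vertices and through the crossings they induce, so their placements are not independent as a bipartite model would require. Without this part, the proposal does not prove the lemma, and the claimed $O(n^{4.5})$ bound rests on an unsubstantiated subroutine rather than on the cycle-counting argument that actually produces it.
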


We present the algorithm supporting \Cref{le:uerr-fixed-angles} in \Cref{subsec:uer-external-face,subsec:uer-drawing}, describing the choice and drawing of the external cycle and the placement of the internal vertices, respectively.
%
%
By exploiting \Cref{le:uerr-fixed-angles} we can obtain an FPT algorithm for the general case, parametrized by the number of degree-3 vertices, by guessing for each degree-3 vertex a pair of incident edges to form the large angle, and apply the algorithm of \Cref{le:uerr-fixed-angles}. We have the following.

\begin{theorem}\label{th:uerr-fpt}
    Let $G$ be a 4-graph and let $k$ be the number of degree-3 vertices $G$. There exists an FPT algorithm, with parameter $k$, that tests whether $G$ admits a \UER drawing. If such a drawing exists, the algorithm constructs one. Moreover, the algorithm can be adapted to preserve a given rotation system and/or to have a prescribed external cycle.
    The time complexity of the algorithm is:
    \begin{enumerate*}
        \item $O(3^k n^2)$ if the four corners are given;
        \item $O(3^k n^4)$ if the external cycle is prescribed;
        \item $O(3^k n^{4.5})$ in the general case.
    \end{enumerate*}
\end{theorem}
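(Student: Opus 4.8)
The plan is to derive \Cref{th:uerr-fpt} directly from \Cref{le:uerr-fixed-angles} by exhaustively enumerating large-angle assignments. Recall that a \UER drawing of $G$ enforces a $180^\circ$ angle at every degree-2 vertex, and that all real-vertices have degree at most $4$ (as $G$ must be a $4$-graph to admit a rectilinear drawing at all). Thus the only freedom in the angular structure at real-vertices lies at the degree-3 vertices: at such a vertex $v$ with $N(v)=\{a,b,c\}$, exactly one of the three incident pairs must form the $180^\circ$ angle (the remaining two angles being $90^\circ$ each), and there are exactly three such pairs. Hence any \UER drawing $\Gamma$ of $G$ realizes some large-angle assignment $\mathcal{A}(G)$, obtained by reading off, for each degree-3 vertex, which pair of neighbors is collinear through $v$ in $\Gamma$.

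First I would make this observation precise: for every \UER drawing $\Gamma$ of $G$ there is a large-angle assignment $\mathcal{A}(G)$ that $\Gamma$ realizes, and conversely any drawing realizing some $\mathcal{A}(G)$ is in particular a \UER drawing. Therefore $G$ admits a \UER drawing if and only if there exists a large-angle assignment $\mathcal{A}(G)$ for which $G$ admits an angle-preserving \UER drawing. The algorithm then iterates over all $3^k$ large-angle assignments — one choice of collinear pair per degree-3 vertex — and for each one invokes the algorithm of \Cref{le:uerr-fixed-angles}. It accepts (and outputs the constructed drawing) as soon as one assignment succeeds, and rejects if all $3^k$ assignments fail. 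Correctness is immediate from the equivalence above: if a \UER drawing exists, the assignment it realizes is among those enumerated and will be found; if none of the enumerated assignments admits an angle-preserving \UER drawing, then no \UER drawing of $G$ exists.

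For the time complexity, each of the $3^k$ iterations runs the algorithm of \Cref{le:uerr-fixed-angles} once, whose running time is $O(n^2)$ when the four corners are given, $O(n^4)$ when the external cycle is prescribed, and $O(n^{4.5})$ in the general case; multiplying by $3^k$ yields the three claimed bounds $O(3^k n^2)$, $O(3^k n^4)$, and $O(3^k n^{4.5})$, respectively. The enumeration of assignments itself adds only $O(3^k)$ overhead and is dominated. Since the test is FPT with parameter $k$ (the dependence on $n$ being polynomial and the exponential factor depending only on $k$), this establishes the theorem. The adaptations to preserve a given rotation system and/or a prescribed external cycle are inherited verbatim from \Cref{le:uerr-fixed-angles}: we simply pass the same constraint through to each invocation, and a rotation system is of course consistent with at most the large-angle assignments it induces, so the enumeration remains valid.

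The step I expect to be the only genuinely substantive one is already packaged in \Cref{le:uerr-fixed-angles}, so in this proof there is no real obstacle — the main point to be careful about is the soundness of the reduction, namely that fixing the large angle at every degree-3 vertex, together with the forced $180^\circ$ angles at degree-2 vertices, captures exactly the angular degrees of freedom of a \UER drawing, and that \Cref{le:uerr-fixed-angles} indeed decides the existence of a drawing realizing a given such assignment rather than merely a necessary condition. Once that is in place, the theorem is a routine enumeration argument.
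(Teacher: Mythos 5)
Your proposal is correct and coincides with the paper's own argument: the theorem is obtained by enumerating the $3^k$ large-angle assignments (one choice of collinear neighbor pair per degree-3 vertex) and invoking \Cref{le:uerr-fixed-angles} for each, with the stated bounds following by multiplying the lemma's running times by $3^k$. Your additional care about the soundness of the reduction (every \UER drawing realizes some assignment, and the lemma decides existence exactly) is exactly the implicit justification the paper relies on.
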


\subsection{Choosing and drawing the external face}\label{subsec:uer-external-face}

Let $G$ be a 4-graph with a large-angle assignment $\mathcal{A}(G)$. We show how to select a set of cycles of $C$ and, for each of them, a set of rectangles representing it that bound the external face in any \UER drawing of $G$ that realizes $\mathcal{A}(G)$.

We start with the case in which the four corners $c_1,c_2,c_3,c_4$ are given (Item $1$ of \Cref{le:uerr-fixed-angles}).
We show that, in this case, the external cycle $C$ is unique. Let $(u,v)$ be an edge that has been selected to be part of $C$, initially one of the edges incident to $c_1$. If $\deg_G(u)=4$, we reject the instance, by \Cref{pr:external-face}. If $\deg_G(u)=2$, the other edge incident to $u$ is also part of $C$, and we continue from there.
If $\deg_G(u)=3$, the two edges forming the large angle must belong to $C$.
\begin{wrapfigure}[10]{r}{0.35\textwidth}
     \centering
    \includegraphics[width=0.275\textwidth,page=3]{./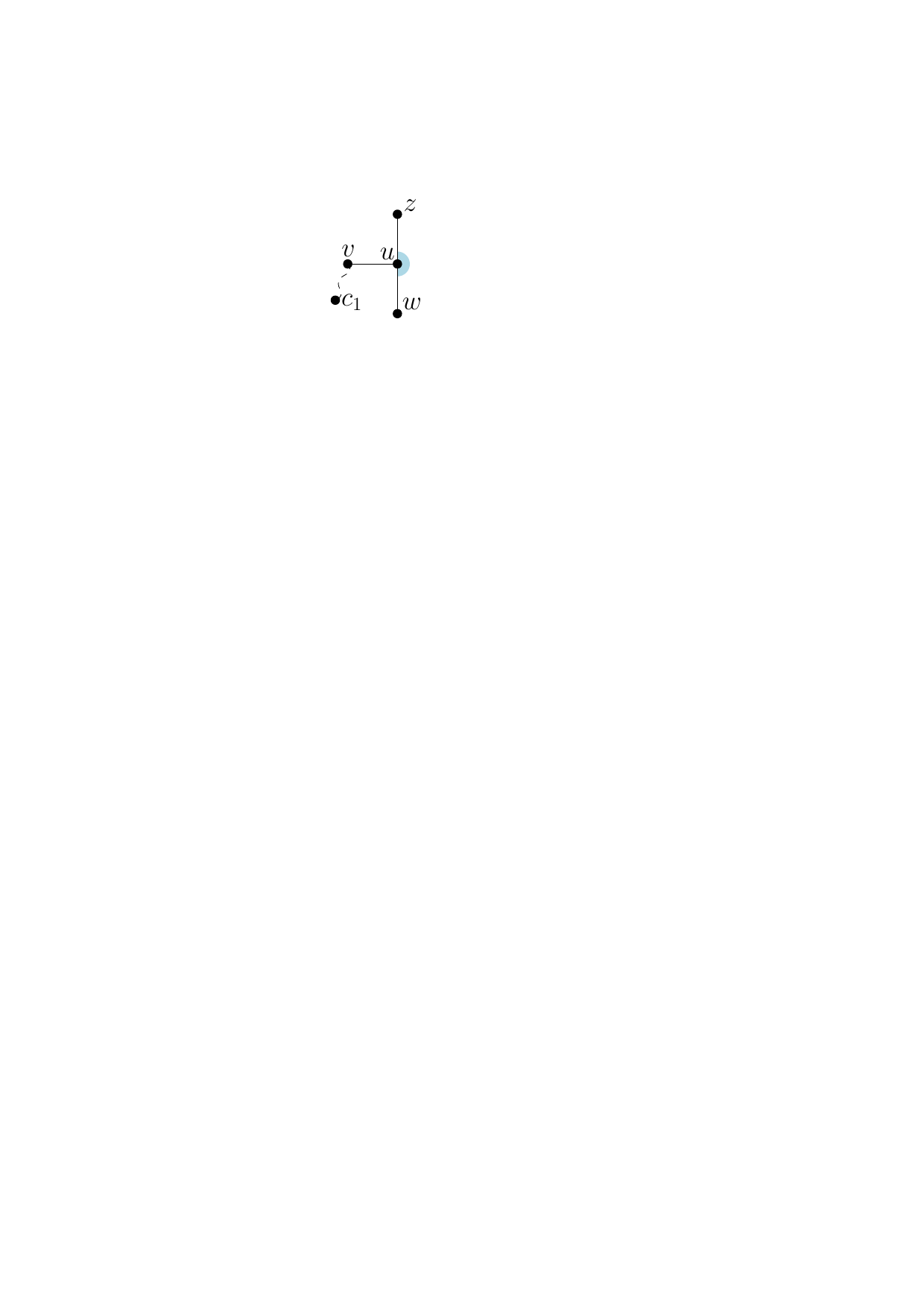}
     \caption{Handling degree-3 vertices on the external face.}
     \label{fig:large-angle}
\end{wrapfigure}
Thus, if $v \notin \mathcal{A}(u)$ (\Cref{fig:large-angle}.a), we reject the instance; otherwise (\Cref{fig:large-angle}.b), the edge creating a large angle around $u$  with $(u,v)$ is in $C$, and we continue from it.
If $c_1$ is reached, we check if the resulting cycle contains $c_2, c_3, c_4$ and if the lengths of the paths connecting them satisfy Condition $(iii)$ of \Cref{pr:external-face}. If so, we can construct a unit edge-length grid rectangle representing~$C$.


If the external cycle is prescribed, but not its corners (Item $2$ of \Cref{le:uerr-fixed-angles}), we check for each degree-3 vertex if its two edges that are on $C$ form the large angle, otherwise we reject the instance. Then, we guess all $O(n^2)$ pairs of adjacent corners, from which we infer the other two corners using Condition $(iii)$ of \Cref{pr:external-face}, and construct the corresponding $O(n^2)$ unit edge-length rectangles.  

If neither the external cycle nor the corners are given (Item $3$ of \Cref{le:uerr-fixed-angles}), note that the external cycle must contain at least one degree-3 vertex, as otherwise $G$ would be a cycle or a non-connected graph. Thus, our strategy is to guess each degree-3 vertex $u$ to be part of the external cycle and use the large-angle assignment to find the corresponding cycle, as in the proof of Item $1$ of \Cref{le:uerr-fixed-angles}. In that proof, we have actually shown that there exists at most one candidate external cycle $C$ that contains $u$. Hence, we do not need to guess any of the other degree-3 vertices that are part of $C$. We claim that a candidate external cycle $C$ contains at least $O(\sqrt{n})$ vertices.
This implies that at the end of the process we have obtained at most $O(\sqrt{n})$ candidate external cycles. Together with Item $2$ of \Cref{le:uerr-fixed-angles}, this implies $O(n^{2.5})$ different unit edge-length grid rectangles. 
We now prove the claim. Given a rectangle $\Gamma_C$, let $h$ and $w$ be the number of vertices different from the corners in the left and in the top side of $\Gamma_C$, respectively. We have that the number of grid points in the interior of $\Gamma_C$ is $i=h\cdot w$, and so $n-|C|\leq i$ must hold. Thus $|C|\geq n-i$, and considering that the value of $i$ is maximized when $h=w$, we have $|C|\geq n-h^2$. Finally, as $|C|=2h+2w+4$ when $h=w$ we obtain $h^2+4h+4-n\geq 0$ that is solved for $h\geq \sqrt{n}-2$. As $|C|>h$, we have that $|C|>\sqrt{n}$.


\subsection{Drawing the internal vertices}\label{subsec:uer-drawing}

From \Cref{subsec:uer-external-face} we obtain either one or $O(n^2)$ or $O(n^{2.5})$ rectangles, respectively, in the three cases of \Cref{le:uerr-fixed-angles}. In the following we show how to place the vertices of $G \setminus C$ in the interior of one of these rectangles in $O(n^2)$ time.

The algorithm is similar to the one in \Cref{th:inner-2-graph}. Namely, we analyze the points of the grid internally to $\Gamma_C$, moving from left to right and secondarily from top to bottom, maintaining two arrays \upA and \leftA. Whenever a point $(i,j)$ is visited, we decide if it is occupied by a vertex (real or dummy) or not (it is internal to a face), based on the values of $\upA[i]$ and $\leftA[j]$, with the same five cases as in \Cref{th:inner-2-graph}. Note that in Case 1 we also check if the placement of vertex $v=\upA[i]=\leftA[j]$ is consistent with its large-angle assignment, if $\deg_G(v)=3$.  

The main difference with respect to \Cref{th:inner-2-graph} is in the update of arrays \upA and \leftA after the placement of $v$, which is easier in this case when $v$ has degree $3$ due to the information deriving from its large-angle assignment. Namely, let $(i,j)$ be the coordinates of $v$. Furthermore, let $u$ and $u'$ be the non-fixed neighbors of $v$, possibly $u'=\nullA$. 

If $\deg_G(v)=2$, we proceed as before: if $v=\leftA[j]$ we set $\leftA[j]=u$ and $\upA[i]=\nullA$, otherwise, we have that $v=\upA[i]$ and we set $\upA[i]=u$ and $\leftA[j]=\nullA$. 

If $\deg_G(v)=3$, then let $\mathcal{A}(v)=\langle w,z\rangle$. We observe that exactly one of $w$ and $z$ is fixed, as pairs formed by the at most two fixed or by the at most two non-fixed neighbors of $v$ cannot create a $180^\circ$ angle. Therefore, we have that  $u=w$, possibly after renaming, and we set $\upA[i]$ or $\leftA[j]$ as $u$, so that $u$ and $z$ are aligned either vertically or horizontally, and the other will be set as $u'$. 

If $\deg_G(v)=4$, then we have no clear way of directly say which of its two non-fixed neighbors should be vertically aligned with $v$, so we will exploit a traversal of the graph as in the algorithm of \Cref{th:inner-2-graph}, starting from $v$ and moving to $u$. Let $w$ be a vertex encountered in this traversal. If $\deg_G(w)=2$, we skip $w$ and continue with its non-visited neighbor. If $\deg_G(w)=4$, we observe that $w$ can be vertically aligned with $v$ if and only if it has exactly one fixed neighbor. Thus we can decide whether $\leftA[j]=u$ and $\upA[i]=u'$ or vice versa by looking at the three neighbors of $w$, or reject as in \Cref{th:inner-2-graph}. If $\deg_G(w)=3$, we first check whether $w \in C$ and, if so, we either reject the instance or update \leftA and \upA accordingly. Otherwise, we distinguish two cases based on whether the last visited vertex $w'$ belongs to $\mathcal{A}(w)$ or not, see \Cref{fig:fpt-traversal}. 

If $w' \notin \mathcal{A}(w)$, we again have that $w$ can be vertically aligned with $v$ if and only if it has exactly one fixed neighbor, which allows us to make a decision as in the degree-4 case. Refer to \Cref{fig:fpt-traversal-down-down} for the case in which it has one fixed neighbor and so it can be vertically aligned, and \Cref{fig:fpt-traversal-right-right} for the case in which it has no fixed neighbor and so it will be horizontally aligned; if $w'$ has more than one fixed neighbor, we reject $\Gamma_C$. 

If $w' \in \mathcal{A}(w)$, then again if $w$ has a fixed neighbor or a neighbor on $C$, we can decide whether $u$ is vertically or horizontally aligned with $v$, see \Cref{fig:fpt-traversal-down-left}. On the other hand, in this case, if $w$ has no placed neighbor we can neither reject the instance nor make a decision, as both assignments are still possible; see \Cref{fig:fpt-traversal-down-right,fig:fpt-traversal-right down,fig:fpt-traversal-up}. In this case, we  treat $w$ as a degree-2 vertex, in the sense that we postpone the decision and move on to the edge incident to $w$ that creates a large angle together with $(w,w')$.
This concludes the description of the algorithm.

\begin{figure}[tb]
        \centering
        \begin{subfigure}[b]{0.32\textwidth}
            \includegraphics[width=\textwidth,page=1]{./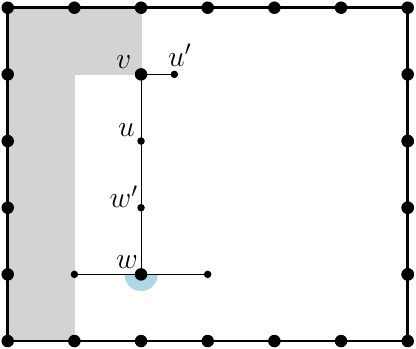}
            \subcaption{}
            \label{fig:fpt-traversal-down-down}
        \end{subfigure}
        \hfil
        \begin{subfigure}{0.32\textwidth}
            \includegraphics[width=\textwidth,page=2]{./figures/fpt-traversal.pdf}
            \subcaption{}
            \label{fig:fpt-traversal-down-left}
        \end{subfigure}
        \hfil
        \begin{subfigure}{0.32\textwidth}
            \includegraphics[width=\textwidth,page=3]{./figures/fpt-traversal.pdf}
            \subcaption{}
            \label{fig:fpt-traversal-down-right}
        \end{subfigure}
        \begin{subfigure}[b]{0.32\textwidth}
            \includegraphics[width=\textwidth,page=4]{./figures/fpt-traversal.pdf}
            \subcaption{}
            \label{fig:fpt-traversal-right-right}
        \end{subfigure}
        \hfil
        \begin{subfigure}{0.32\textwidth}
            \includegraphics[width=\textwidth,page=5]{./figures/fpt-traversal.pdf}
            \subcaption{}
            \label{fig:fpt-traversal-right down}
        \end{subfigure}
        \hfil
        \begin{subfigure}{0.32\textwidth}
            \includegraphics[width=\textwidth,page=6]{./figures/fpt-traversal.pdf}
            \subcaption{}
            \label{fig:fpt-traversal-up}
        \end{subfigure}
        \caption{If $w'\notin \mathcal{A}(w)$, then if $w$ has a fixed neighbor $u$ is vertically aligned with $v$ (a), otherwise it is horizontally aligned (d). If $w'\in \mathcal{A}(w)$, if it has a fixed neighbor, then it is vertically aligned with $v$ (b). In the other cases (c)(e)(f), we continue the traversal along the large angle of $w$ to find another vertex with degree 3 or 4.}
        \label{fig:fpt-traversal}
    \end{figure}

The correctness and the time complexity can be proved similarly to \Cref{th:inner-2-graph}. The decisions based on large-angle assignments have been justified during the description of the algorithm and can be performed in constant time per vertex. It is important to note that each vertex can be visited in a constant number of traversals, as for each of them we either make a final decision or we find a unique edge to follow in the next step of the traversal. The $O(n^2)$ time complexity thus descends from the size of the grid.



\section{Open Problems}\label{sec:conclusions}

We conclude with open problems that arise from our results. 
The complexity of the problem in the general case remains open, as well as the question of whether one can improve our super-linear time algorithms.
%
%
Additionally, it would be interesting to adapt our drawing convention to vertices of high degree (larger than 4). Finally, an extension to 3D drawings could be considered, requiring that the internal regions have a parallelepipedal shape.

\bibliography{biblio-new}

\end{document}